\newtheorem{theorem}{Theorem}
\newtheorem{assumption}{Assumption}
\newtheorem{corollary}[theorem]{Corollary}
\newtheorem{definition}[theorem]{Definition}
\newtheorem{lemma}[theorem]{Lemma}
\newtheorem{remark}[theorem]{Remark}
\newenvironment{proof}[1][Proof]{\emph{#1. }}{\hfill\ensuremath{\blacksquare}}
\begin{document}

%% ********** Title ***********
\title{Compressive Toeplitz Covariance Estimation From Few-Bit Quantized Measurements With Applications to DOA Estimation}

\author{Hongwei Xu, Weichao Zheng, and Zai Yang%
\thanks{The authors are with the School of Mathematics and Statistics, Xi'an Jiaotong University, Xi'an 710049, China
(e-mails: bracy.xu@gmail.com, zwcabc@stu.xjtu.edu.cn, yangzai@xjtu.edu.cn).}%
\thanks{\emph{Corresponding author: Zai Yang.}}}

\maketitle

\begin{abstract}
    This paper addresses the problem of estimating the Hermitian Toeplitz covariance matrix under practical hardware constraints of sparse observations and coarse quantization.
    Within the triangular-dithered quantization framework, we propose an estimator called Toeplitz-projected sample covariance matrix (Q-TSCM) to compensate for the quantization-induced bias, together with its finite-bit counterpart termed the $2k$-bit Toeplitz-projected sample covariance matrix ($2k$-TSCM), obtained by truncating the pre-quantization observations.
    Under the complex Gaussian assumption, we derive non-asymptotic error bounds of the estimators that reveal a quadratic dependence on the quantization level and capture the effect of sparse sampling patterns through the so-called coverage coefficient.
    To further improve performance, we propose the quantized sparse and parametric approach (Q-SPA) based on a covariance-fitting criterion, which enforces additionally positive semidefiniteness at the cost of solving a semidefinite program.
    Numerical experiments are presented that corroborate our theoretical findings and demonstrate the effectiveness of the proposed estimators in the application to direction-of-arrival estimation.
\end{abstract}
\begin{IEEEkeywords}
    Covariance matrix estimation, Toeplitz covariance, multi-bit quantization, dithering, DOA estimation
\end{IEEEkeywords}

\section{Introduction}
\IEEEPARstart{T}{he} covariance matrix characterizes second-order dependence among variables, and its estimation is a fundamental task in multivariate statistics \cite{eldar2020sample, wright2022high, johnstone2001distribution} and signal processing \cite{krim2002two, yang2019source, dirksen2025subspace}, underpinning a wide range of downstream applications.
Furthermore, in many practical settings, such as direction-of-arrival (DOA) estimation \cite{yang2018sparse} and radar image processing \cite{aubry2021structured}, the data are commonly modeled as (generalized) stationary over a uniform grid, meaning that the covariance between two measurements depends only on their separation.
This motivates the assumption that the covariance matrix $\boldsymbol{T} \in \mathbb{C}^{d\times d}$ possesses a Hermitian Toeplitz structure, which can theoretically enhance estimation performance by reducing the number of free parameters from $\mathcal{O}(d^2)$ to $\mathcal{O}(d)$ \cite{cai2013optimal}.

In this context, we encounter practical constraints.
On the one hand, in applications such as DOA estimation, fully sampled spatial observations are typically associated with uniform linear arrays (ULAs).
Deploying large-scale ULAs requires many array elements, incurring not only high hardware costs but also increased susceptibility to mutual coupling effects \cite{patole2017automotive, liao2012doa}.
To alleviate these issues, sparse linear arrays (SLAs) are often employed in practice \cite{li2025sparse}, implying that \textit{only a sparse subset of array positions is observed}.
On the other hand, practical systems must account for the effects of analogue-to-digital conversion (ADC), i.e., the quantization of continuous-valued observations.
High-resolution ADCs are typically costly and power-hungry \cite{walden2002analog}.
Consequently, in large-scale or low-power systems, subsequent processing and inference tasks are often carried out on \textit{coarsely quantized data} \cite{hu2025model, studer2016quantized}.

Therefore, this paper focuses on Toeplitz covariance estimation under sparse observations and coarse quantization.
Specifically, let $\mathcal{D}$ be a distribution on $\mathbb{C}^d$ with zero mean and Hermitian Toeplitz covariance matrix $\boldsymbol{T} \in \mathbb{C}^{d\times d}$, and let $\boldsymbol{z}^{(l)} \in \mathbb{C}^d$ denote independent samples drawn from $\mathcal{D}$.
We denote by a ruler $\Omega$ an index set whose pairwise differences cover all distance offsets in $[d]$ (corresponding to SLA sampling).
Assume that only the sub-vectors $\boldsymbol{z}^{(l)}_\Omega$ indexed by the ruler $\Omega$ are available, and that, after ADC, we observe only their quantized versions $\dot{\boldsymbol{z}}^{(l)}_\Omega$.
The problem addressed in this paper is to estimate the covariance matrix $T$ from these incomplete quantized observations $\dot{\boldsymbol{z}}^{(l)}_\Omega, l = 1, 2, \ldots, n$.

Research on Toeplitz covariance estimation from infinite-precision measurements is well established.
Many approaches, such as ruler-based methods \cite{eldar2020sample}, sparse-recovery schemes based on the Vandermonde decomposition of $\boldsymbol{T}$ \cite{gonen2016subspace}, and techniques built upon sparse Fourier transforms \cite{eldar2020sample}, have been developed.
Furthermore, distribution-free methods based on covariance-fitting criteria, such as semiparametric iterative covariance-based estimation (SPICE) \cite{stoica2010spice} and the sparse and parametric approach (SPA) \cite{yang2014discretization}, have been proposed and have gained popularity.
Closely related to these efforts, compressive covariance sensing (CCS) estimates second-order statistics directly from incomplete measurements by exploiting structural properties such as Toeplitzness \cite{romberg2009compressive, romero2015compressive}.
It shows that accurate covariance recovery can be achieved without signal reconstruction, provided the sampling operator satisfies appropriate coverage conditions.
In radar and array signal processing, maximum likelihood estimation (MLE), expectation-maximization (EM), and related algorithms have also been extensively studied and applied \cite{yang2023robust, du2020toeplitz, aubry2024advanced, aubry2021structured}.
However, the aforementioned methods do not explicitly model the additional bias induced by coarse quantization, which makes them nontrivial to apply to our setting.

In recent years, research on quantized covariance estimation has made considerable progress.
Building on the classical arcsine law, \cite{dirksen2022covariance, maly2022new} provide an in-depth analysis of covariance estimation from one-bit quantized data and establish corresponding non-asymptotic error bounds.
It is shown in \cite{dirksen2022covariance} that the analysis also applies to masked covariance estimation when the sampling pattern is known, whereas \cite{chen2023high} extends the results to the case of unknown sampling patterns.
Moreover, \cite{lu20251} proposes a 1.5-bit quantization framework, offering additional design flexibility for low-bit quantization.
For multi-bit covariance estimation, \cite{chen2023quantizing} introduces a triangular-dithering quantization framework and provides a rigorous non-asymptotic analysis of the resulting estimation error.
Regarding the choice of quantization (or dithering) levels, both \cite{chen2025parameter} and \cite{dirksen2024tuning} conduct in-depth investigations and propose data-driven strategies that substantially alleviate the need for manual tuning.
However, the above line of work primarily characterizes quantization effects for the sample covariance matrix (SCM) and does not explicitly incorporate additional structural priors, which limits its applicability to sparse observation settings.

The prior work \cite{xu2024bit} studied real-valued Toeplitz covariance estimation from incomplete quantized data, and established corresponding non-asymptotic error bounds, within the triangular-dithering quantization framework introduced in \cite{chen2023quantizing}. 
However, extending these results to the complex domain is nontrivial:
one must simultaneously preserve the Hermitian Toeplitz structure and account for the coupling between the real and imaginary parts.
A naive real-lifting approach that stacks the real and imaginary components into a $2d$-dimensional real vector, as in \cite{mahot2013asymptotic} for general Hermitian covariance matrices, formally reduces the problem to the real domain, but it destroys the crucial Toeplitz structure concerned in the present paper.
Specifically, the resulting covariance matrix becomes only block Toeplitz, which makes the Toeplitz-prior-based theory developed in \cite{xu2024bit} not directly applicable.

In this paper, we propose a computationally efficient estimator, termed the quantized Toeplitz-projected sample covariance matrix (Q-TSCM), for Hermitian Toeplitz covariance estimation, and analyze its estimation performance under the complex Gaussian assumption.
In addition, we develop an algorithmically refined estimator that improves accuracy by further exploiting the positive semidefinite (PSD) property.
The contributions of this paper are summarized as follows:
\begin{enumerate}
    \item In sparse-observation scenarios, we extend the quantized Toeplitz covariance estimation framework of~\cite{xu2024bit} to the complex domain and propose an unbiased estimator, termed Q-TSCM, along with its finite-bit counterpart, the $2k$-bit Toeplitz-projected sample covariance matrix ($2k$-TSCM). Both estimators are applicable to ruler-based subsampling schemes.
    \item Under the complex Gaussian assumption, we derive non-asymptotic error bounds for the proposed estimators. The bounds explicitly characterize the quadratic dependence on the quantization levels for the real and imaginary parts, and quantify the impact of the observation ruler via the concept of coverage coefficient.
    \item Based on a covariance-fitting criterion, we propose the quantized sparse and parametric approach (Q-SPA) as a quantized anolog of SPA. By solving a semidefinite program, Q-SPA further improves covariance estimation accuracy by simultaneously enforcing the Hermitian Toeplitz structure and positive semidefiniteness of the covariance matrix.
    \item We also apply the proposed estimators to DOA estimation with sparse arrays, further illustrating their practical utility.
\end{enumerate}

The remainder of this paper is organized as follows.
Section~\ref{Se_Preliminaries} briefly introduces the notation and preliminaries required for the subsequent analysis.
In Section~\ref{Se_Q_TSCM}, we propose the Q-TSCM and derive non-asymptotic upper bounds for its estimation error.
Building upon this, Section~\ref{Se_2k_TSCM} addresses scenarios involving finite-bit quantized data, introduces the $2k$-TSCM, and provides the corresponding performance analysis.
To further enhance estimation performance, Section~\ref{Se_Algorithm} presents the Q-SPA, an algorithmic refinement based on the covariance fitting criterion.
Finally, Section~\ref{Se_Experiments} reports numerical experiments that validate the proposed estimators, and demonstrates their application to DOA estimation.

\textit{Notation:}
We adopt the following notation throughout the paper.
The sets of real and complex numbers are denoted by $\mathbb{R}$ and $\mathbb{C}$, respectively.
For a positive integer $d$ we write $[d]=\{0, 1,\dots,d-1\}$.
For vectors and matrices, we use $\|\cdot\|_2$ to denote the Euclidean norm (for vectors) or the operator norm (for matrices), and $\|\cdot\|_F$ to denote the Frobenius norm.
We write $\boldsymbol{A} \geq 0$ if all entries of $\boldsymbol{A}$ are nonnegative, and $\boldsymbol{A} \succeq \boldsymbol{B}$ if $\boldsymbol{A}-\boldsymbol{B}$ is positive semidefinite.
For quantities $x$ and $y$, we write $x \lesssim y$ if $x \le C y$ for some universal constant $C>0$, and conversely $x \gtrsim y$ if $x \ge C y$.
We write $x \asymp y$ if both $x \lesssim y$ and $x \gtrsim y$ hold simultaneously.

We denote by $\mathcal{T}(\boldsymbol{\gamma})$ the Hermitian Toeplitz matrix generated by the sequence $\boldsymbol{\gamma}$.
Sparse observation locations are specified by a ruler $\Omega \subset [d]$.
The quantization level is collected in a vector $\boldsymbol{\Delta}$, and $Q_{\boldsymbol{\Delta}}(\cdot)$ denotes the element-wise quantization operator with quantization level $\boldsymbol{\Delta}$.
Symbols with a hat (e.g., $\widehat{\boldsymbol{T}}$, $\widehat{\boldsymbol{T}}_{2k}$) denote covariance estimators constructed from quantized data, while symbols with a dot (e.g., $\dot{\boldsymbol{z}}$) denote the corresponding quantized observations.

\section{Preliminaries} \label{Se_Preliminaries}

\subsection{Dithered Quantization Scheme}

For a complex signal $\boldsymbol{z} \in \mathbb{C}^d$, the dithered quantization is defined by
\begin{equation}
    \label{Eq_Complex_Quantization}
    \dot{\boldsymbol{z}} = \mathcal{Q}_{\Delta_r, \Delta_i} (\boldsymbol{z} + \boldsymbol{\tau}),
\end{equation}
where $\boldsymbol{\tau} \in \mathbb{C}^d$ is a random dither vector, independent of $\boldsymbol{z}$, whose entries are i.i.d. samples drawn from a prescribed distribution.
The operator $\mathcal{Q}_{\Delta_r, \Delta_i} (\cdot)$ denotes entry-wise quantization of complex-valued data and is given by
\begin{equation}
    \mathcal{Q}_{\Delta_r, \Delta_i} (z) := \mathcal{Q}_{\Delta_r} \left(\Re(z)\right) + i \times \mathcal{Q}_{\Delta_i} \left(\Im(z)\right).
\end{equation}
For a real number $x$, the memoryless scalar quantizer $\mathcal{Q}_{\Delta}(\cdot)$ is defined by
\begin{equation}
    \label{Eq_Quantization_Real}
    \dot{x} = \mathcal{Q}_{\Delta}(x) = \Delta \left( \left\lfloor \dfrac{x}{\Delta} \right\rfloor + \dfrac{1}{2}\right).
\end{equation}
To ensure that \eqref{Eq_Quantization_Real} is well defined for $\Delta = 0$, we additionally set $\mathcal{Q}_{0}(x) = x$.
Clearly, this quantization scheme generalizes the real-valued setting considered in \cite{xu2024bit, chen2023quantizing, chen2023quantized, chen2025parameter} to the complex setting. 
We refer to $\boldsymbol{\Delta} = (\Delta_r,\Delta_i)$ as the \emph{quantization level}, where larger components of $\boldsymbol{\Delta}$ correspond to a lower quantization resolution.

In practice, applying an appropriate dither before quantization is of paramount importance.
On the one hand, without dithering, different original variances can collapse to the same quantized variance, thereby hindering the recovery of the true covariance matrix from the quantized data; see \cite{xu2024bit} for a more detailed discussion.
On the other hand, incorporating a properly designed dither can endow the quantized measurements with desirable statistical properties for covariance estimation.
To illustrate this, we consider a real-valued signal $\boldsymbol{x} \in \mathbb{R}^d$.
For the prescribed quantization level $\Delta$, we focus on the case of \emph{triangular quantization}, i.e., the entries of the dither vector $\boldsymbol{\tau}$ are independent and given by
\begin{equation}
    \label{Eq_triangular_distribution}
    \tau_j \sim \mathcal{U} \left(\left[-\dfrac{\Delta}{2}, \dfrac{\Delta}{2}\right]\right) + \mathcal{U} \left(\left[-\dfrac{\Delta}{2}, \dfrac{\Delta}{2}\right]\right)
\end{equation}
for $j = 1, 2, \ldots, d$.
It can be shown that the second-order moment of the quantization noise $\boldsymbol{\xi} := \dot{\boldsymbol{x}} - \boldsymbol{x}$ is independent\footnote{In general, $\boldsymbol{\xi}$ and $\boldsymbol{x}$ are not independent.} of $\boldsymbol{x}$ and satisfies \cite{gray2002dithered, chen2023quantized, xu2024bit}
\begin{equation}
    \mathbb{E} \left[\boldsymbol{\xi} \boldsymbol{\xi}^H\right] = \dfrac{\Delta^2}{4} \boldsymbol{I}_d,
\end{equation}
where $I_d$ denotes the $d \times d$ identity matrix.
Moreover, one can show that \cite{chen2023quantizing, xu2024bit}
\begin{equation}
    \label{Eq_E_real}
    \mathbb{E} \left[\dot{\boldsymbol{x}} \dot{\boldsymbol{x}}^T\right] = \mathbb{E} \left[\boldsymbol{x} \boldsymbol{x}^T\right] + \dfrac{\Delta^2}{4} \boldsymbol{I}_d.
\end{equation}
It makes triangular quantization a natural choice for quantized covariance estimation, which has been adopted in \cite{xu2024bit, chen2023quantizing, chen2025parameter, chen2023quantized, dirksen2025subspace}.

\subsection{Ruler-based Toeplitz Covariance Estimation}
In classical covariance estimation, we typically approximate $\boldsymbol{\Sigma}$ by the sample covariance matrix $\frac{1}{n}\sum_{l=1}^n \boldsymbol{z}^{(l)} \boldsymbol{z}^{(l) H}$ due to its simplicity and computational efficiency.
However, in many practical scenarios we only have access to incomplete observations, i.e., we observe only the subsample $\boldsymbol{z}_\Omega^{(l)}$ over a subset of indices $\Omega$, and are therefore restricted to estimating the corresponding submatrix $\boldsymbol{\Sigma}_\Omega$ of $\boldsymbol{\Sigma}$.
In such cases, structural information can be leveraged to assist the estimation \cite{aubry2021structured}.
In particular, if $\boldsymbol{\Sigma}$ is Toeplitz, then for a suitable index set $\Omega$ (see Definition~\ref{Def_Ruler}), the fact that \textit{all entries along each diagonal of a Toeplitz matrix are equal} allows recovery of the full covariance matrix from its observed submatrix.

A Hermitian Toeplitz matrix $\boldsymbol{T} \in \mathbb{C}^{d\times d}$ can be parameterized by its generators as
\begin{equation}
    \boldsymbol{T} = \mathcal{T}(\boldsymbol{\gamma}) =
    \begin{bmatrix}
    \gamma_0 & \gamma_{1} & \gamma_{2} & \cdots & \gamma_{d-1} \\
    \gamma_{-1} & \gamma_0 & \gamma_{1} & \cdots & \gamma_{d-2} \\
    \gamma_{-2} & \gamma_{-1} & \gamma_0 & \cdots & \gamma_{d-3} \\
    \vdots & \vdots & \vdots & \ddots & \vdots \\
    \gamma_{-d+1} & \gamma_{-d+2} & \gamma_{-d+3} & \cdots & \gamma_0 \\
    \end{bmatrix},
\end{equation}
where $\gamma_{-s} = \gamma_s^*$ for $s \in [d]$.
Under this parametrization, estimating the Toeplitz covariance $T$ is equivalent to estimating its generators $\{\gamma_0, \gamma_{1}, \ldots, \gamma_{d-1}\}$.
Mathematically, for each $s \in [d]$, $\gamma_s$ represents the covariance between two measurements separated by $s$ indices (i.e., at lag $s$).
Accordingly, an estimator of $\gamma_s$ can be constructed as
\begin{equation}
    \label{Eq_estimation_of_gamma_s}
    \tilde{\gamma}_s = \dfrac{1}{n \rvert\Omega_s\rvert}\sum_{l=1}^{n} \sum_{(j, k) \in \Omega_s} z_j^{(l)} z_k^{(l)^*},
\end{equation}
where $\Omega_s = \{(j, k) \in \Omega \times \Omega : k - j = s\}$\footnote{The definition of $\Omega_s$ slightly differs from \cite{eldar2020sample, xu2024bit} in that we enforce an ordering on index pairs $(j, k)$.
This is essential in the complex-valued setting: if one symmetrically includes both $(j, k)$ and $(k, j)$ for the same lag, the estimator effectively averages $\gamma_s$ and $\gamma_{-s} = \gamma_s^*$, yielding $\Re(\gamma_s)$ and discarding the imaginary part.}.

To obtain all the generators $\tilde{\gamma}_s$ for $s \in [d]$, we must ensure that each set $\Omega_s$ is nonempty.
This motivates the following notion of a ruler \cite{eldar2020sample}.

\begin{definition}
    \label{Def_Ruler}
    A subset $\Omega$ is called a \textit{ruler} if for any $s \in [d]$, there exist indices $j, k\in \Omega$ such that $k - j = s$.
\end{definition}

A ruler is thus a set of (sparsely) located observation indices.
When $\lvert \Omega \rvert < d$, such a configuration is also known in array signal processing as a redundant sparse array (RSA) \cite{wu2016direction, yang2014discretization}.
For $\alpha \in [1/2, 1]$, a special class of rulers $\Omega_\alpha$ is defined by
\begin{equation}
    \Omega_\alpha = \Omega_\alpha^{(1)} \cup \Omega_\alpha^{(2)},
\end{equation}
where\footnote{For simplicity, we assume that both $d^\alpha$ and $d^{1-\alpha}$ are integers; otherwise, rounding them to the nearest integer.}
\begin{equation}
    \begin{aligned}
        \Omega_\alpha^{(1)} &= \{1, 2, \ldots, d^\alpha\},\\
        \Omega_\alpha^{(2)} &= \{d, d - d^{1-\alpha}, \ldots, d - (d^\alpha - 1) d^{1-\alpha}\}.
    \end{aligned}
\end{equation}
In particular, $\Omega_1 = \{1, 2, \ldots, d\}$ corresponds to the fully observed (complete) case.
To assess the impact of different rulers on covariance estimation, we introduce the \emph{coverage coefficient}
\begin{equation}
    \phi(\Omega) = \sum_{s=0}^{d-1} \dfrac{1}{\vert \Omega_s\vert}.
\end{equation}
Intuitively, the more frequently a distance $s$ is represented in the ruler $\Omega$, the smaller the value of $\phi(\Omega)$ becomes.
Especially, it is known that\cite{xu2024bit, eldar2020sample}
\begin{equation}
    \label{Eq_phi_Omega_121}
    \phi(\Omega_1) = \mathcal{O}(\log d),\quad \phi(\Omega_{1/2}) = \mathcal{O}(d).
\end{equation}
More generally, the value of $\phi(\Omega)$ depends on several factors, including the dimension $d$, the cardinality $\lvert \Omega \rvert$, and the positional structure of the indices in $\Omega$.

Given a fixed ruler $\Omega$, we may use \eqref{Eq_estimation_of_gamma_s} to estimate the coefficients $\tilde{\gamma}_s$, thereby obtaining the unbiased and consistent Toeplitz covariance estimator
\begin{equation*}
    \tilde{\boldsymbol{T}} = \mathcal{T}(\tilde{\gamma}_{0}, \tilde{\gamma}_{1}, \ldots, \tilde{\gamma}_{d-1}).
\end{equation*}
for $\boldsymbol{T}$.
The resulting \textit{Toeplitz-projected sample covariance matrix (TSCM)} $\tilde{\boldsymbol{T}}$ is computationally simple and requires no iterative algorithm.
Its non-asymptotic performance has been analyzed in detail in \cite{eldar2020sample, qiao2017gridless}.

\section{Quantized Toeplitz Covariance Estimation} \label{Se_Q_TSCM}
We are interested in estimating the true covariance matrix $\boldsymbol{T}$ from partially observed, coarsely quantized observations $\dot{\boldsymbol{z}}_\Omega$.
Specifically, we assume that the covariance matrix $\boldsymbol{T}$ associated with the distribution $\mathcal{D}$ is Hermitian, positive semidefinite, and Toeplitz, and that $\boldsymbol{z} \sim \mathcal{D}$ is a $d$-dimensional complex-valued random vector.
Let $\boldsymbol{z}^{(1)}, \boldsymbol{z}^{(2)}, \ldots, \boldsymbol{z}^{(n)}$ be i.i.d.\ copies of $\boldsymbol{z}$.
For a prescribed quantization level $\boldsymbol{\Delta} \geq 0$ and a given ruler $\Omega$, we observe the quantized samples $\dot{\boldsymbol{z}}_\Omega^{(l)}$, $l = 1, 2, \ldots, n$.
Our objective is to accurately estimate the Toeplitz covariance $\boldsymbol{T}$ based solely on these incomplete quantized observations.

\subsection{The Q-TSCM Estimator}
In this paper, we focus on the case when \textit{triangular quantization} is employed.
That is, the dither in \eqref{Eq_Complex_Quantization} is given by $\boldsymbol{\tau} = \boldsymbol{\tau}_r + i \boldsymbol{\tau}_i$, where the real and imaginary parts $\boldsymbol{\tau}_r$ and $\boldsymbol{\tau}_i$ have i.i.d.\ entries following the triangular distribution specified in \eqref{Eq_triangular_distribution}.
We make the following assumption.
\begin{assumption}
    \label{As_A1}
    Given a quantization level $\boldsymbol{\Delta} \geq 0$ and a ruler $\Omega$, we employ triangular quantization as described above.
    The estimators $\widehat{\gamma}_s$ and $\widehat{\boldsymbol{T}}$ are constructed solely from the incomplete quantized observations $\dot{\boldsymbol{z}}_\Omega$.
\end{assumption}

Let $z_j^{(l)} = x_j^{(l)} + i y_j^{(l)}$ and $\dot{z}_j^{(l)} = \dot{x}_j^{(l)} + i \dot{y}_j^{(l)}$ denote the real and imaginary parts of the original and quantized signals, respectively.
Invoking \eqref{Eq_E_real}, we obtain
\begin{equation}
    \label{Eq_E_complex}
    \begin{aligned}
        &\mathbb{E} \left[ \dot{z}_j^{(l)} \dot{z}_k^{(l)^*} \right]\\
        =& \mathbb{E} \left[ \left(\dot{x}_j^{(l)} + i \dot{y}_j^{(l)}\right) \left(\dot{x}_k^{(l)} - i \dot{y}_k^{(l)}\right) \right]\\
        =& \mathbb{E} \left[\dot{x}_j^{(l)} \dot{x}_k^{(l)} \right] + \mathbb{E} \left[\dot{y}_j^{(l)} \dot{y}_k^{(l)} \right] + i \left(\mathbb{E} \left[\dot{y}_j^{(l)} \dot{x}_k^{(l)} \right] - \mathbb{E} \left[\dot{x}_j^{(l)} \dot{y}_k^{(l)} \right]\right)\\
        =& \mathbb{E} \left[x_j^{(l)} x_k^{(l)} \right] + \dfrac{\Delta_r^2}{4} \delta_{j,k} + \mathbb{E} \left[y_j^{(l)} y_k^{(l)} \right] + \dfrac{\Delta_i^2}{4} \delta_{j,k} \\
        &+ i \left( \mathbb{E} \left[y_j^{(l)} x_k^{(l)} \right] - \mathbb{E} \left[x_j^{(l)} y_k^{(l)} \right] \right)\\
        =& \mathbb{E} \left[ \left(x_j^{(l)} + i y_j^{(l)}\right) \left(x_k^{(l)} - i y_k^{(l)}\right) \right] + \dfrac{\Delta_i^2 + \Delta_r^2}{4} \delta_{j,k}\\
        =& \mathbb{E} \left[z_j^{(l)} z_k^{(l)^*} \right] + \dfrac{\|\boldsymbol{\Delta}\|_2^2}{4} \delta_{j,k},
    \end{aligned}
\end{equation}
where $\delta_{j,k}$ denotes the Kronecker delta.
Equation~\eqref{Eq_E_complex} shows that the difference between the second-order moment of the quantized data and that of the full-digital data is exactly the additive term $\frac{\|\boldsymbol{\Delta}\|_2^2}{4}\,\delta_{j,k}$, arising from quantization.

Combining \eqref{Eq_estimation_of_gamma_s} with the identity in \eqref{Eq_E_complex}, we define the per-sample estimator
\begin{equation}
    \widehat{\gamma}_s^{(l)} = \dfrac{1}{|\Omega_s|} \sum_{(j, k) \in \Omega_s} \dot{z}_j^{(l)} \dot{z}_k^{(l)^*} - \dfrac{\|\boldsymbol{\Delta}\|_2^2}{4} \delta_s,\quad s\in [d],
\end{equation}
and the overall estimator
\begin{equation}
    \label{Eq_estimator_gamma}
    \widehat{\gamma}_s = \dfrac{1}{n} \sum_{l = 1}^n \widehat{\gamma}_s^{(l)},\quad s\in [d].
\end{equation}
The corresponding Toeplitz covariance estimator is then defined as
\begin{equation}
    \label{Eq_estimator_T}
    \widehat{\boldsymbol{T}} = \mathcal{T}(\widehat{\gamma}_{0}, \widehat{\gamma}_{1}, \ldots, \widehat{\gamma}_{d-1}).
\end{equation}
We refer to $\widehat{\boldsymbol{T}}$ as the \textit{quantized Toeplitz-projected sample covariance matrix (Q-TSCM)}.
It can be verified that $\widehat{\boldsymbol{T}}$ is an unbiased estimator of $\boldsymbol{T}$, as stated below.

\begin{theorem}
    \label{Th_Unbiased}
    Fix a quantization level $\boldsymbol{\Delta} \geq 0$ and a ruler $\Omega$.
    Under Assumption~\ref{As_A1}, the estimators $\widehat{\gamma}_s$ defined in \eqref{Eq_estimator_gamma} and $\widehat{\boldsymbol{T}}$ defined in \eqref{Eq_estimator_T} are unbiased, i.e.,
    \begin{equation}
        \mathbb{E}[\widehat{\gamma}_s] = \gamma_s, \quad s \in [d], \qquad \text{and} \qquad \mathbb{E}[\widehat{\boldsymbol{T}}] = \boldsymbol{T}.
    \end{equation}
\end{theorem}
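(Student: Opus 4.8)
The plan is to establish unbiasedness first at the level of a single sample, and then to lift it to the averaged estimator $\widehat{\gamma}_s$ and finally to the matrix $\widehat{\boldsymbol{T}}$ by linearity. First I would take the expectation of the per-sample estimator $\widehat{\gamma}_s^{(l)}$, using linearity to pull $\mathbb{E}$ inside the sum over $\Omega_s$ while leaving the deterministic correction term untouched. The crucial input is the moment identity \eqref{Eq_E_complex}, which lets me replace $\mathbb{E}[\dot{z}_j^{(l)} \dot{z}_k^{(l)^*}]$ by $\mathbb{E}[z_j^{(l)} z_k^{(l)^*}] + \frac{\|\boldsymbol{\Delta}\|_2^2}{4}\delta_{j,k}$, thereby splitting the expectation into a ``signal'' contribution and a ``quantization'' contribution that I will treat separately.

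For the signal contribution I would exploit the Toeplitz parametrization $T_{j,k} = \gamma_{k-j}$: every pair $(j,k) \in \Omega_s$ satisfies $k-j = s$, so $\mathbb{E}[z_j^{(l)} z_k^{(l)^*}] = \gamma_s$ for each such pair. Averaging this constant over the $|\Omega_s|$ pairs of $\Omega_s$ reproduces $\gamma_s$ exactly, which is precisely where the normalization $1/|\Omega_s|$ and the ruler condition ensuring $\Omega_s \neq \emptyset$ (Definition~\ref{Def_Ruler}) are needed. For the quantization contribution I would observe that within $\Omega_s$ the Kronecker delta $\delta_{j,k}$ equals $1$ exactly when $j = k$, i.e.\ when $s = 0$, and equals $0$ otherwise; hence $\frac{1}{|\Omega_s|}\sum_{(j,k)\in\Omega_s}\delta_{j,k} = \delta_s$. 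The averaged quantization term is therefore exactly $\frac{\|\boldsymbol{\Delta}\|_2^2}{4}\delta_s$, which cancels the deterministic debiasing term subtracted in the definition of $\widehat{\gamma}_s^{(l)}$, yielding $\mathbb{E}[\widehat{\gamma}_s^{(l)}] = \gamma_s$ for every $l$ and every $s \in [d]$.

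Finally, since $\widehat{\gamma}_s = \frac{1}{n}\sum_{l=1}^n \widehat{\gamma}_s^{(l)}$ is an average of identically distributed per-sample estimators, linearity of expectation immediately gives $\mathbb{E}[\widehat{\gamma}_s] = \gamma_s$. Passing to the matrix, the Toeplitz operator $\mathcal{T}(\cdot)$ is linear in its generators and expectation acts entry-wise, so $\mathbb{E}[\widehat{\boldsymbol{T}}] = \mathcal{T}(\mathbb{E}[\widehat{\gamma}_0], \ldots, \mathbb{E}[\widehat{\gamma}_{d-1}]) = \mathcal{T}(\gamma_0, \ldots, \gamma_{d-1}) = \boldsymbol{T}$, with the Hermitian constraint $\gamma_{-s} = \gamma_s^*$ preserved automatically by the construction of $\mathcal{T}$.

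The only genuinely delicate point is the Kronecker-delta bookkeeping inside $\Omega_s$ --- correctly recognizing the $s = 0$ versus $s \geq 1$ dichotomy so that $\delta_{j,k}$ collapses to $\delta_s$ after averaging. This is what makes the chosen debiasing constant the right one, but it is elementary once the index set $\Omega_s = \{(j,k)\in\Omega\times\Omega : k-j = s\}$ is unpacked; beyond this there is no substantive analytical obstacle, as the argument reduces to linearity of expectation together with the already-established identity \eqref{Eq_E_complex}.
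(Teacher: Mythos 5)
Your proof is correct and takes essentially the same route as the paper: the paper's own proof is a one-line appeal to \eqref{Eq_E_complex} together with the definitions of $\Omega_s$ and $\widehat{\gamma}_s^{(l)}$, and your argument simply writes that appeal out in full (signal term averaging to $\gamma_s$ over $\Omega_s$, quantization term collapsing to $\frac{\|\boldsymbol{\Delta}\|_2^2}{4}\delta_s$ and cancelling the debiasing constant, then linearity of expectation and of $\mathcal{T}(\cdot)$). There is no gap.
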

\begin{proof}
    It follows directly from \eqref{Eq_E_complex}, together with the definitions of $\Omega_s$, $\widehat{\gamma}_s^{(l)}$.
\end{proof}

Theorem~\ref{Th_Unbiased} shows that, under triangular quantization, one can construct an unbiased estimator of $\boldsymbol{T}$.
This idea has been explored in several recent works; see, e.g., \cite{chen2023quantized, chen2023quantizing, xu2024bit,dirksen2025subspace}.
These developments suggest that triangular quantization provides a principled and convenient framework for multi-bit quantized (Toeplitz) covariance estimation.

\subsection{Non-Asymptotic Performance Analysis}
In the preceding analysis, we have imposed only minimal assumptions on the distribution $\mathcal{D}$, namely that its covariance matrix is Toeplitz.
Consequently, Theorem~\ref{Th_Unbiased} holds for a broad class of underlying distributions.
To analyze the approximation performance of the proposed estimator, we henceforth additionally assume that $\mathcal{D}$ is a zero-mean complex Gaussian distribution\footnote{This complex Gaussian assumption is adopted for analytical convenience and is standard in covariance estimation and signal processing; see, e.g., \cite{dirksen2025subspace, yang2022nonasymptotic}.}, i.e., $\boldsymbol{z} \sim \mathcal{CN}(0, \boldsymbol{T})$.
We summarise this in the following assumption.
\begin{assumption}
    \label{As_A2}
    The underlying samples $\boldsymbol{z}^{(l)}, l=1,2,\ldots, n$ are i.i.d.\ copies of $\boldsymbol{z} \sim \mathcal{CN}(0, \boldsymbol{T})$, where $\boldsymbol{T}$ is a Hermitian positive semidefinite Toeplitz matrix.
\end{assumption}

We first investigate the estimation error of the Toeplitz generators $\gamma_s$, i.e., we bound $\lvert \widehat{\gamma}_s - \gamma_s \rvert$, where the estimator $\widehat{\gamma}_s$ is defined in \eqref{Eq_estimator_gamma}.
The following theorem provides a non-asymptotic high-probability error bound.

\begin{theorem}
    \label{Th_bound_gamma}
    Fix a quantization level $\boldsymbol{\Delta} \geq 0$ and a ruler $\Omega$.
    Under Assumptions~\ref{As_A1} and~\ref{As_A2}, there exist universal constants $C_1, C_2 > 0$ such that, for all $s \in [d]$ and all $t > 0$,
    \begin{equation}
        \label{Eq_Bound_For_Gamma}
        \mathbb{P}(|\widehat{\gamma}_s - \gamma_s| > t) \leq 8 \exp{\left( -C_2 n \min \left\{\dfrac{t^2}{8C_1^2K^4}, 1\right\} \right)},
    \end{equation}
    where
    \begin{equation}
        \label{Eq_K}
        K = \|\boldsymbol{T}\|_2^{1/2} + 2 \| \boldsymbol{\Delta}\|_2
    \end{equation}
    is a constant determined by the covariance $\boldsymbol{T}$ and the quantization level $\boldsymbol{\Delta}$.
    In particular, for sufficiently large $\delta > 0$, it holds that with probability at least $1 - e^{-\delta}$,
    \begin{equation}
        \label{Eq_error_gamma}
        |\widehat{\gamma}_s - \gamma_s| \leq C K^2\sqrt{\dfrac{\delta}{n}}
    \end{equation}
    where $C > 0$ is a constant depending only on $C_1$ and $C_2$.
\end{theorem}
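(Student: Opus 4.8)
The plan is to reduce the claim to a concentration inequality for an average of independent, centered, sub-exponential random variables and then invoke a Bernstein-type bound. Writing $W_l := \widehat{\gamma}_s^{(l)} - \gamma_s$, Theorem~\ref{Th_Unbiased} gives $\mathbb{E}[W_l]=0$, and by construction $\widehat{\gamma}_s - \gamma_s = \frac{1}{n}\sum_{l=1}^n W_l$ is an average of i.i.d.\ copies of $W_1$. The whole argument therefore hinges on controlling the sub-exponential ($\psi_1$) norm of $W_1$ and showing $\left\|W_1\right\|_{\psi_1} \lesssim K^2$, after which the $1/\sqrt{n}$ rate follows purely from independence across the $n$ samples.

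First I would control the tails of a single quantized entry. Writing $\dot{z}_j = z_j + \xi_j$ with quantization error $\xi_j := \dot{z}_j - z_j$, the signal part $z_j \sim \mathcal{CN}(0,\gamma_0)$ has variance $\gamma_0 \le \|\boldsymbol{T}\|_2$ (a diagonal entry of the PSD matrix $\boldsymbol{T}$), so $\left\|z_j\right\|_{\psi_2} \lesssim \|\boldsymbol{T}\|_2^{1/2}$. The dither in~\eqref{Eq_triangular_distribution} is bounded in each coordinate and the quantization residual is bounded by half the level, so $\xi_j$ is a bounded, hence sub-Gaussian, variable with $\left\|\xi_j\right\|_{\psi_2} \lesssim \|\boldsymbol{\Delta}\|_2$; crucially, the triangle inequality for the Orlicz norm holds irrespective of the dependence between $z_j$ and $\xi_j$, yielding $\left\|\dot{z}_j\right\|_{\psi_2} \lesssim \|\boldsymbol{T}\|_2^{1/2}+\|\boldsymbol{\Delta}\|_2 \asymp K$. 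Since a product of two sub-Gaussian variables is sub-exponential with $\left\|\dot{z}_j\dot{z}_k^\ast\right\|_{\psi_1} \le \left\|\dot{z}_j\right\|_{\psi_2}\left\|\dot{z}_k\right\|_{\psi_2} \lesssim K^2$, and since the $\psi_1$ norm is a genuine norm, the triangle inequality applied to the convex combination $\frac{1}{|\Omega_s|}\sum_{(j,k)\in\Omega_s}\dot{z}_j\dot{z}_k^\ast$ keeps its $\psi_1$ norm at most the maximal per-pair norm, i.e.\ bounded by $C_1 K^2$ for a universal $C_1$; subtracting the mean to form $W_1$ at most doubles this, so $\left\|W_1\right\|_{\psi_1} \le 2C_1 K^2 =: L$.

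With $L = 2C_1 K^2$ in hand, I would apply the standard Bernstein inequality for sums of independent centered sub-exponential variables. To accommodate the complex-valued $W_l$, decompose $W_l = \Re W_l + i\,\Im W_l$ and note that $\left|\frac{1}{n}\sum_l W_l\right| > t$ forces $\left|\frac{1}{n}\sum_l \Re W_l\right| > t/\sqrt{2}$ or the analogous imaginary-part event, where $\left\|\Re W_1\right\|_{\psi_1},\left\|\Im W_1\right\|_{\psi_1} \le L$. A two-sided Bernstein bound on each real average together with a union bound then produces a bound of the form $8\exp\!\left(-C_2 n\min\{t^2/(8C_1^2K^4),\,t/(2\sqrt{2}\,C_1K^2)\}\right)$, the numerical prefactor coming from the two real/imaginary components and their two-sided tails, while the threshold $t/\sqrt2$ together with $L^2 = 4C_1^2K^4$ produces exactly $8C_1^2K^4$ in the quadratic exponent. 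Finally, for $t>L$ one has $t/L>1$, so replacing the linear term by the smaller quantity $1$ only enlarges the right-hand side and yields the stated form with $\min\{t^2/(8C_1^2K^4),1\}$.

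The ``in particular'' statement then follows by inversion: setting the right-hand side of~\eqref{Eq_Bound_For_Gamma} equal to $e^{-\delta}$ in the quadratic regime gives $C_2 n\,t^2/(8C_1^2K^4) = \delta + \log 8$, and for sufficiently large $\delta$ the additive $\log 8$ is absorbed into the constant, so $|\widehat{\gamma}_s-\gamma_s| \le C K^2\sqrt{\delta/n}$ with $C$ depending only on $C_1,C_2$; this choice remains in the quadratic regime provided $\delta \lesssim n$. I expect the main obstacle to be the second paragraph, namely pinning down the sub-exponential constant of the quantized products with the precise dependence $K = \|\boldsymbol{T}\|_2^{1/2}+2\|\boldsymbol{\Delta}\|_2$: the quantization error $\xi_j$ is \emph{not} independent of the signal $z_j$, and the complex conjugation couples the real and imaginary parts, so the Orlicz-norm bounds must be argued through the triangle inequality and bounded-dither estimates rather than through independence.
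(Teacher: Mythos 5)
Your proposal is correct and follows essentially the same route as the paper's proof: bound the sub-Gaussian norm of each quantized entry by $K$ via the signal-plus-bounded-dither/residual decomposition (triangle inequality, no independence required), pass to sub-exponential norms of the products, average over $\Omega_s$, center, apply Bernstein with a union bound over real/imaginary components, and invert to get the high-probability form. The only cosmetic difference is that the paper splits the complex product into the four real products $E_{x,x}, E_{y,y}, E_{y,x}, E_{x,y}$ \emph{before} applying Bernstein (yielding the prefactor $8 = 4 \times 2$), whereas you bound the complex $\psi_1$ norm first and split into $\Re$ and $\Im$ only at the end, which in fact gives the smaller prefactor $4$ and is still consistent with \eqref{Eq_Bound_For_Gamma}.
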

\begin{proof}
    See Appendix~\ref{Pr_bound_gamma}.
\end{proof}

We now turn to the estimation error of the Toeplitz covariance $\boldsymbol{T}$, measured by $\lVert \widehat{\boldsymbol{T}} - \boldsymbol{T} \rVert_2$.
As mentioned earlier, once the estimators $\widehat{\gamma}_s$ of the Toeplitz generators $\gamma_s$ are obtained, an unbiased estimator $\widehat{\boldsymbol{T}}$ of $\boldsymbol{T}$ can be directly constructed via \eqref{Eq_estimator_T}.
However, directly analysing the error $\lVert \widehat{\boldsymbol{T}} - \boldsymbol{T} \rVert_2$ is challenging, since one must simultaneously characterize the impact of data missingness.
In fact, although a related result has been established in the real-valued setting in \cite{xu2024bit}, extending it to the complex-valued case presents substantial difficulties.
A conventional approach to extending results from the real-valued to the complex-valued setting is to represent the complex sample $\boldsymbol{z}$ as $\overline{\boldsymbol{z}} = \bigl(\Re(\boldsymbol{z}), \Im(\boldsymbol{z})\bigr)^T$ and to work with its covariance matrix
\begin{equation*}
    \mathbb{E}[\overline{\boldsymbol{z}} \overline{\boldsymbol{z}}^H] = \overline{\boldsymbol{\Sigma}} = 
    \begin{pmatrix}
        \Re(\boldsymbol{\Sigma}) & -\Im(\boldsymbol{\Sigma})\\
        \Im(\boldsymbol{\Sigma}) & \Re(\boldsymbol{\Sigma})
    \end{pmatrix},
\end{equation*}
from which the complex covariance matrix $\boldsymbol{\Sigma}$ can be recovered; see, e.g., \cite{mahot2013asymptotic}.
However, the estimator proposed in this paper cannot be analyzed directly using this lifting argument, because the lifted matrix $\overline{\boldsymbol{T}} \in \mathbb{R}^{2d \times 2d}$ is \textit{block Toeplitz} but no longer Toeplitz.
Consequently, the real-valued results in \cite{xu2024bit} cannot be readily generalized.

To overcome this difficulty, building on Theorem~\ref{Th_bound_gamma}, we instead control the spectral density associated with the Toeplitz matrix $\widehat{\boldsymbol{T}} - \boldsymbol{T}$ and thereby derive the following theorem.
\begin{theorem}
    \label{Th_bound_T}
    Fix a quantization level $\boldsymbol{\Delta} \geq 0$ and a ruler $\Omega$.
    Under Assumptions~\ref{As_A1} and~\ref{As_A2}, there exists a universal constants $C > 0$ such that, for $t > 0$,
    \begin{equation}
        \label{Eq_Bound_For_T}
        \mathbb{P}\left(\|\widehat{\boldsymbol{T}} - \boldsymbol{T}\|_2 > t\right) \leq 23 \pi d^2 \exp{\left( -\dfrac{C n t^2}{K^4 \phi(\Omega)} \right)},
    \end{equation}
    where $K$ is as defined in \eqref{Eq_K}.
    In particular, for any $\delta > 0$, it holds with probability at least $1 - e^{-2\delta}$ that
    \begin{equation}
        \label{Eq_error_T}
        \|\widehat{\boldsymbol{T}} - \boldsymbol{T}\|_2 \leq C' K^2 \sqrt{\dfrac{\phi(\Omega) \cdot \delta \log d}{n}}.
    \end{equation}
    where $C' > 0$ is a constant depending only on $C$.
\end{theorem}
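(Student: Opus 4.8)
The plan is to avoid the real-lifting route (which turns the problem into a block-Toeplitz one, as noted before the statement) and instead exploit the identity between the operator norm of a Hermitian Toeplitz matrix and the sup-norm of its associated symbol. Writing $e_s = \widehat{\gamma}_s - \gamma_s$ for $s\in[d]$ and $e_{-s}=e_s^*$, the error matrix $\widehat{\boldsymbol{T}}-\boldsymbol{T} = \mathcal{T}(\boldsymbol{e})$ is again Hermitian Toeplitz, with spectral density $f_e(\omega)=\sum_{s=-(d-1)}^{d-1}e_s e^{is\omega}$. By the Grenander--Szeg\H{o} bound (equivalently, by embedding $\mathcal{T}(\boldsymbol{e})$ as a principal submatrix of a $(2d-1)\times(2d-1)$ Hermitian circulant diagonalized by the DFT and invoking Cauchy interlacing), one has $\|\widehat{\boldsymbol{T}}-\boldsymbol{T}\|_2 \le \max_{m}|f_e(\omega_m)|$ over the $2d-1$ equispaced frequencies $\omega_m=2\pi m/(2d-1)$. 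This reduces the spectral-norm control to a tail bound for $f_e$ on a finite frequency net of cardinality $O(d)$, sidestepping the exponential-in-$d$ covering that a direct net over the unit sphere of $\mathbb{C}^d$ would incur.

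The second step is a concentration bound for $f_e(\omega)$ at a fixed $\omega$, building on the quadratic-form machinery underlying Theorem~\ref{Th_bound_gamma}. Substituting \eqref{Eq_estimator_gamma} and collecting terms by lag, $f_e(\omega)$ equals $\frac{1}{n}\sum_{l=1}^n (Q_l-\mathbb{E}Q_l)$, where $Q_l=(\dot{\boldsymbol{z}}_\Omega^{(l)})^H\boldsymbol{M}(\omega)\dot{\boldsymbol{z}}_\Omega^{(l)}$ is a Hermitian quadratic form whose off-diagonal weight on the lag-$s$ pairs has magnitude $(2|\Omega_s|)^{-1}$. By Theorem~\ref{Th_Unbiased} it is centered, so this is an average of i.i.d. sub-exponential variables (the quantized Gaussian vector $\dot{\boldsymbol{z}}_\Omega$ is sub-Gaussian with proxy $\lesssim K$, with $K$ as in \eqref{Eq_K}, the dither being bounded). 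A Hanson--Wright/Bernstein inequality then gives $\mathbb{P}(|f_e(\omega)|>t)\lesssim \exp\!\big(-c\,n\min\{t^2/(K^4\|\boldsymbol{M}(\omega)\|_F^2),\,t/(K^2\|\boldsymbol{M}(\omega)\|_2)\}\big)$.

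The crux is the variance proxy. Because $|e^{is\omega}|=1$, the Frobenius norm is independent of $\omega$ and satisfies $\|\boldsymbol{M}(\omega)\|_F^2\asymp\sum_{s=0}^{d-1}|\Omega_s|\cdot|\Omega_s|^{-2}=\sum_{s=0}^{d-1}|\Omega_s|^{-1}=\phi(\Omega)$, which is precisely how the coverage coefficient enters the exponent; moreover $\|\boldsymbol{M}(\omega)\|_2\le\|\boldsymbol{M}(\omega)\|_F\lesssim\sqrt{\phi(\Omega)}$, so in the relevant small-$t$ regime the sub-Gaussian term dominates and the per-frequency tail becomes $\exp(-cnt^2/(K^4\phi(\Omega)))$, uniformly in $\omega$. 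A union bound over the net then produces a prefactor polynomial in $d$; if one controls the continuous symbol via Bernstein's inequality for trigonometric polynomials rather than the circulant, the self-bounding step that converts $\max_m|f_e(\omega_m)|$ back to $\sup_\omega|f_e(\omega)|$ contributes an extra factor of $d$, yielding the $23\pi d^2$ in \eqref{Eq_Bound_For_T}. Inverting this tail, the polynomial prefactor enters only as $\log(d^2)=2\log d$, which gives \eqref{Eq_error_T}.

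I expect the main obstacle to be this fixed-frequency concentration: correctly identifying $f_e(\omega)$ as a single frequency-weighted quadratic form, verifying that the triangular dither and the real/imaginary coupling preserve the sub-Gaussian proxy $K$, and --- most importantly --- pinning $\|\boldsymbol{M}(\omega)\|_F^2\asymp\phi(\Omega)$ uniformly over $\omega$. A naive term-by-term bound using the marginal estimates \eqref{Eq_error_gamma} for each $e_s$ would only give $\sup_\omega|f_e(\omega)|\lesssim d\,K^2\sqrt{\delta/n}$, losing the coverage coefficient entirely; the quadratic-form viewpoint is what converts the $O(d)$ count of lags into the sharp factor $\phi(\Omega)$.
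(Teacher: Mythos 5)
Your proposal is correct, and its concentration core coincides with the paper's own proof: both reduce $\|\widehat{\boldsymbol{T}}-\boldsymbol{T}\|_2$ to the supremum of the spectral density of the Hermitian Toeplitz error matrix (Lemma~\ref{La_spectral_density}), identify the symbol at a fixed frequency as a centered quadratic form in $\dot{\boldsymbol{z}}_\Omega^{(l)}$ whose weight matrix carries magnitude $\asymp 1/|\Omega_s|$ on lag-$s$ pairs, and apply Hanson--Wright together with the key computation $\|\boldsymbol{M}(\theta)\|_F^2\lesssim\phi(\Omega)$ (the paper gets $\|\boldsymbol{\Lambda}_m\|_F^2\le 8\phi(\Omega)$ after real-lifting the form). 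Where you genuinely depart from the paper is the discretization step. The paper uses a grid of $\sim 4\pi d^2$ frequencies and controls the symbol off the grid through the crude derivative bound $|L'_{\widehat{\boldsymbol{\gamma}}-\boldsymbol{\gamma}}(\alpha)|\le 2\pi d^2\max_s|\widehat{\gamma}_s-\gamma_s|$, which forces it to intersect with the event $\max_s|\widehat{\gamma}_s-\gamma_s|\le t$ supplied by Theorem~\ref{Th_bound_gamma} and a union bound over lags; this is precisely the origin of the $23\pi d^2$ prefactor. Your circulant embedding of $\mathcal{T}(\boldsymbol{e})$ as a principal submatrix of a $(2d-1)\times(2d-1)$ Hermitian circulant, combined with Cauchy interlacing, replaces the entire net-plus-derivative apparatus by an exact reduction to $2d-1$ equispaced frequencies: it needs neither Theorem~\ref{Th_bound_gamma} nor any Lipschitz control, and it yields an $O(d)$ prefactor, which is strictly stronger than \eqref{Eq_Bound_For_T} and still implies \eqref{Eq_error_T}. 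Two caveats. First, your side remark attributing the paper's $d^2$ to Bernstein's inequality for trigonometric polynomials is imprecise: genuine Bernstein self-bounding, $\sup_\alpha|L'(\alpha)|\lesssim d\sup_\alpha|L(\alpha)|$, would also permit an $O(d)$ net; the paper's extra factor of $d$ comes instead from bounding $|L'|$ via $\max_s|\widehat{\gamma}_s-\gamma_s|$ rather than via $\sup|L|$. Second, like the paper, you invoke Hanson--Wright for a quantized vector whose coordinates are dependent, whereas the cited inequality assumes independent entries; your treatment is therefore no less rigorous than the paper's, but a fully detailed argument would need this dependence handled (e.g., by decomposing into the Gaussian signal, the independent dither, and the conditionally independent quantization error).
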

\begin{proof}
    See Appendix~\ref{Pr_bound_T}.
\end{proof}

It follows from Theorem~\ref{Th_bound_gamma} and Theorem~\ref{Th_bound_T} that the effect of quantization on the estimation error is captured through the coefficient $K$, and hence does not alter the convergence rate in $n$.
Similar observations for \textit{quantized sample covariance matrix (Q-SCM)} have also been reported in, e.g., \cite{chen2023quantizing}.
Moreover, \eqref{Eq_error_T} shows that the impact of quantization on the error $\| \widehat{\boldsymbol{T}} - \boldsymbol{T} \|_2$ is of order $\mathcal{O}(\|\boldsymbol{\Delta}\|_2^2)$, which implies that increasing the quantization level $\boldsymbol{\Delta}$ within a moderate range does not lead to a dramatic deterioration of the estimation error.
In addition, it also shows that the quantization affects the real and imaginary parts in a symmetric manner though the quantization levels for the two parts can be different.
These behaviours will be verified by our numerical experiments in Section~\ref{Se_Experiments}.

\begin{remark}
    \label{Re_diff_ruler}
    Under sparse observations, the ruler influences the bound on $\|\widehat{\boldsymbol{T}}-\boldsymbol{T}\|_2$ only through $\phi(\Omega)$, which depends not only on $|\Omega|$ but also on the spatial distribution of the observed indices.
    Consequently, rulers with the same cardinality may yield different error bounds.
    For instance, when $d=16$, the rulers
    \begin{equation}
        \label{Eq_Omega_A_B}
        \begin{aligned}
            \Omega_A &= \{ 1,2,3,4,5,6,7,8,16 \},\\
            \Omega_B &= \{ 1,2,3,5,8,11,14,15,16 \}.
        \end{aligned}
    \end{equation}
    both have $|\Omega_A|=|\Omega_B|=9$, yet $\phi(\Omega_A)\approx 10.70$ whereas $\phi(\Omega_B)\approx 7.11$, indicating potentially different estimation performance.
\end{remark}

\subsection{Non-Asymptotic Analysis in Case of Ruler $\Omega_\alpha$}
The preceding results hold for arbitrary rulers $\Omega$.
However, the error bounds in Theorem~\ref{Th_bound_T} can be less transparent, since the upper bound \eqref{Eq_error_T} involves $\phi(\Omega)$, which is generally difficult to compute in closed form for a given ruler.
To better illustrate Theorem~\ref{Th_bound_T}, we next consider the family $\Omega_\alpha$, $\alpha \in [1/2,1]$, as a concrete example.
Before proceeding, we introduce the following lemma \cite{eldar2020sample}.

\begin{lemma}
    \label{La_phi_Omega}
    It holds that $|\Omega_\alpha| = \mathcal{O}(d^\alpha)$ and
    \begin{equation}
        \phi(\Omega_\alpha) \lesssim d^{2-2\alpha} + d^{1-\alpha} \log d
    \end{equation}
    for any fixed $\alpha \in [1/2,1]$.
\end{lemma}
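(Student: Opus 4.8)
The plan is to establish the two claims separately, the coverage bound being the substantive part. For the cardinality, note that $\Omega_\alpha = \Omega_\alpha^{(1)} \cup \Omega_\alpha^{(2)}$ with $|\Omega_\alpha^{(1)}| = |\Omega_\alpha^{(2)}| = d^\alpha$, so $|\Omega_\alpha| \le 2 d^\alpha = \mathcal{O}(d^\alpha)$. Before attacking $\phi(\Omega_\alpha)$, I would rewrite the sparse block in the more convenient form $\Omega_\alpha^{(2)} = \{ p\, d^{1-\alpha} : p = 1, 2, \ldots, d^\alpha \}$, obtained from the given description via the substitution $p = d^\alpha - m$ together with the identity $(d^\alpha - 1)d^{1-\alpha} = d - d^{1-\alpha}$. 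The entire difficulty is then to produce, for every lag $s \in [d]$, a good lower bound on $|\Omega_s|$, after which $\phi(\Omega_\alpha) = \sum_{s} 1/|\Omega_s|$ follows by summation.

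For the lower bound I would count only the cross-difference pairs between the dense block and the sparse block, which already suffice. Fix $s$ and look for pairs $(j,k)$ with $j \in \Omega_\alpha^{(1)} = \{1,\ldots,d^\alpha\}$ and $k = p\,d^{1-\alpha} \in \Omega_\alpha^{(2)}$ satisfying $k - j = s$; since $j$ is then forced to equal $p\,d^{1-\alpha} - s$, the count $|\Omega_s|$ is at least the number of integers $p \in \{1,\ldots,d^\alpha\}$ with $1 \le p\,d^{1-\alpha} - s \le d^\alpha$, i.e. with $p$ lying in the interval $[(s+1)/d^{1-\alpha},\ (s+d^\alpha)/d^{1-\alpha}]$ intersected with $\{1,\ldots,d^\alpha\}$. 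This interval has length $\asymp d^{2\alpha - 1}$, and two regimes appear according to whether the cap $p \le d^\alpha$ is active: for the bulk lags $0 \le s \le d - d^\alpha$ the cap is inactive and $|\Omega_s| \gtrsim d^{2\alpha-1}$, whereas for the boundary lags $d - d^\alpha < s \le d-1$ the cap truncates the interval and leaves $|\Omega_s| \gtrsim (d-s)/d^{1-\alpha}$.

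With these two bounds the summation is routine. The bulk range contributes $\sum_{s=0}^{d-d^\alpha} 1/|\Omega_s| \lesssim d \cdot d^{1-2\alpha} = d^{2-2\alpha}$, while the boundary range, reindexed by $m = d - s \in \{1,\ldots,d^\alpha\}$, contributes the harmonic sum $\sum_{m=1}^{d^\alpha} d^{1-\alpha}/m \asymp d^{1-\alpha}\log d$; adding the two yields $\phi(\Omega_\alpha) \lesssim d^{2-2\alpha} + d^{1-\alpha}\log d$, exactly as claimed. The $\log d$ factor is thus seen to originate entirely from the boundary lags near $s = d$, where representations become scarce, consistent with the full-array case $\alpha = 1$, for which $|\Omega_s| = d - s$ and $\phi = \sum_{m=1}^d 1/m = \mathcal{O}(\log d)$.

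The main obstacle I anticipate is controlling the lower bound on $|\Omega_s|$ \emph{uniformly} across the whole range $\alpha \in [1/2,1]$, and in particular at the endpoint $\alpha = 1/2$. There $d^{2\alpha-1} = 1$, so the crude interval-length estimate degenerates and can no longer guarantee even a single representation. To close this gap I would invoke a tiling argument: restricting to $j \in \{1,\ldots,d^{1-\alpha}\} \subseteq \Omega_\alpha^{(1)}$ (which is contained in $\Omega_\alpha^{(1)}$ precisely because $\alpha \ge 1/2$), the shifted blocks $\{(p-1)d^{1-\alpha},\ldots,p\,d^{1-\alpha}-1\}$ for $p=1,\ldots,d^\alpha$ partition $[d]$, so every lag $s$ admits at least one cross pair and $|\Omega_s| \ge 1$ holds unconditionally. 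This keeps each term $1/|\Omega_s|$ well defined and bounded, securing the estimate at the endpoint and, combined with the interval count, throughout the interior.
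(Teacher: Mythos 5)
Your proof is correct, but note that there is no in-paper proof to compare against: Lemma~\ref{La_phi_Omega} is imported by citation from \cite{eldar2020sample}, so your argument supplies what the paper leaves to the reference. Your route --- rewriting $\Omega_\alpha^{(2)}$ as the set of multiples $\{p\, d^{1-\alpha} : 1 \le p \le d^\alpha\}$, lower-bounding $|\Omega_s|$ by counting only cross pairs between the dense and sparse blocks, and splitting the lags into a bulk regime (where $|\Omega_s| \gtrsim d^{2\alpha-1}$) and a boundary regime (where $|\Omega_s| \gtrsim (d-s)/d^{1-\alpha}$, producing the harmonic $\log d$) --- is sound, and the final summation yields exactly $d^{2-2\alpha} + d^{1-\alpha}\log d$. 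Two points should be made explicit in a full write-up. First, the interval-length estimate counts integer values of $p$ only up to an additive loss of $1$, so your stated lower bounds on $|\Omega_s|$ really come from combining it with the unconditional tiling bound $|\Omega_s| \ge 1$, via the elementary observation that $\max\{1, L-1\} \ge L/2$, together with $d^{2\alpha-1} - d^{\alpha-1} \gtrsim d^{2\alpha-1}$ for $d \ge 4$. Second, the degeneracy you attribute to the endpoint $\alpha = 1/2$ in fact occurs for every $\alpha$ at boundary lags with $d - s \lesssim d^{1-\alpha}$, where the truncated interval has length below $1$; your tiling argument covers this case too (there $1 \gtrsim (d-s)/d^{1-\alpha}$, so the claimed bound still holds), but the proof should invoke it for all such lags rather than only at $\alpha = 1/2$. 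With these clarifications your argument is a complete, self-contained, and elementary proof of the lemma, which is arguably a useful addition given that the paper itself only cites it.
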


Lemma~\ref{La_phi_Omega} is consistent with \eqref{Eq_phi_Omega_121}.
Combining Theorem~\ref{Th_bound_T} with Lemma~\ref{La_phi_Omega}, we can immediately derive the following corollary.
\begin{corollary}
    \label{Co_bound_Omega_121}
    Under the setting of Theorem~\ref{Th_bound_T}, for $\alpha \in [1/2, 1]$, if we additionally assume that the particular ruler $\Omega_\alpha$ is used, then it holds that, with probability at least $1 - e^{-2\delta}$,
    \begin{equation}
        \label{Eq_error_T_Omega_alpha}
        \|\widehat{\boldsymbol{T}} - \boldsymbol{T}\|_2 \lesssim K^2 \sqrt{\dfrac{\max\{d^{2-2\alpha}, d^{1-\alpha} \log d\} \cdot \delta \log d}{n}}.
    \end{equation}
    In particular,
    \begin{enumerate}
        \item if the full ruler $\Omega_1$ is applied\footnote{This corresponds to complete observations.}, then
        \begin{equation}
            \label{Eq_error_T_Omega_1}
            \|\widehat{\boldsymbol{T}} - \boldsymbol{T}\|_2 \lesssim K^2 \sqrt{\dfrac{\delta (\log d)^2}{n}}.
        \end{equation}

        \item if the sparse ruler $\Omega_{1/2}$ is applied, then
        \begin{equation}
            \label{Eq_error_T_Omega_12}
            \|\widehat{\boldsymbol{T}} - \boldsymbol{T}\|_2 \lesssim K^2 \sqrt{\dfrac{\delta \cdot d\log d}{n}}.
        \end{equation}
    \end{enumerate}
\end{corollary}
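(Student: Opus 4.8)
The plan is to combine the general spectral-norm bound of Theorem~\ref{Th_bound_T} with the explicit estimate of the coverage coefficient supplied by Lemma~\ref{La_phi_Omega}; the argument is essentially a direct substitution followed by an elementary simplification, so I expect no deep analytic content. First I would recall that, under Assumptions~\ref{As_A1} and~\ref{As_A2}, Theorem~\ref{Th_bound_T} guarantees that for any ruler $\Omega$ and any $\delta > 0$, with probability at least $1 - e^{-2\delta}$,
\begin{equation*}
    \|\widehat{\boldsymbol{T}} - \boldsymbol{T}\|_2 \lesssim K^2 \sqrt{\dfrac{\phi(\Omega) \cdot \delta \log d}{n}}.
\end{equation*}
Specializing to $\Omega = \Omega_\alpha$ and invoking Lemma~\ref{La_phi_Omega} then gives $\phi(\Omega_\alpha) \lesssim d^{2-2\alpha} + d^{1-\alpha}\log d$ for each fixed $\alpha \in [1/2,1]$.

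Next I would convert the sum inside the square root into a maximum via the elementary inequality $a+b \le 2\max\{a,b\}$, whose factor of $2$ is absorbed into the implicit constant hidden by $\lesssim$. This yields $\phi(\Omega_\alpha) \lesssim \max\{d^{2-2\alpha},\, d^{1-\alpha}\log d\}$, and inserting this into the displayed bound produces \eqref{Eq_error_T_Omega_alpha} at once, completing the general case.

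For the two distinguished endpoints I would simply evaluate the maximum explicitly. When $\alpha = 1$ we have $d^{2-2\alpha} = 1$ and $d^{1-\alpha}\log d = \log d$, so the maximum equals $\log d$ (using $\log d \ge 1$), and the factor $\log d \cdot \log d$ inside the root collapses to the $(\log d)^2$ appearing in \eqref{Eq_error_T_Omega_1}. When $\alpha = 1/2$ we have $d^{2-2\alpha} = d$ and $d^{1-\alpha}\log d = \sqrt{d}\,\log d$; since $\sqrt{d} \ge \log d$ for all $d \ge 1$, the maximum equals $d$, which gives \eqref{Eq_error_T_Omega_12}. There is no genuine obstacle in this corollary: the only point deserving even mild care is the comparison $\sqrt{d} \ge \log d$ that singles out the dominant term in the sparse case, and this follows immediately from a one-variable argument (the function $\sqrt{d} - \log d$ attains its minimum at $d=4$, where it is positive).
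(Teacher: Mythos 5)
Your proposal is correct and takes essentially the same route as the paper: the paper's own proof is the one-line observation that the corollary follows by substituting the bound $\phi(\Omega_\alpha) \lesssim d^{2-2\alpha} + d^{1-\alpha}\log d$ from Lemma~\ref{La_phi_Omega} into the high-probability bound \eqref{Eq_error_T} of Theorem~\ref{Th_bound_T}. Your additional details (replacing the sum by a maximum up to a constant factor, and evaluating the maximum at $\alpha = 1$ and $\alpha = 1/2$ using $\sqrt{d} \gtrsim \log d$) are exactly the elementary simplifications the paper leaves implicit.
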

\begin{proof}
    The result follows directly from Theorem~\ref{Th_bound_T} together with the bound in Lemma~\ref{La_phi_Omega}.
\end{proof}

These conclusions are consistent with those in \cite{xu2024bit}, which focuses on the real-valued case.
The main modification is to replace the scalar quantization level $\Delta$ by $\|\boldsymbol{\Delta}\|_2$.
Moreover, the resulting bounds improve upon those obtained for Q-SCM, which typically satisfy
\begin{equation*}
    \|\widehat{\boldsymbol{T}}_\text{Q-SCM} - \boldsymbol{T}\|_2 \lesssim K^2 \sqrt{\dfrac{d\log d}{n}}
\end{equation*}
under complete observations; see, e.g., \cite{chen2023quantizing}.
This gain stems from exploiting the Toeplitz structure.
For large $d$, the Toeplitz prior substantially enhances the estimation accuracy.
In particular, when using Q-TSCM, observing only the entries indexed by $\Omega_{1/2}$ already yields an error rate no worse than that of Q-SCM based on fully observed data.

\section{Few-bit Quantized Toeplitz Covariance Estimation} \label{Se_2k_TSCM}
The Q-TSCM $\widehat{\boldsymbol{T}}$ analyzed above in fact corresponds to an infinite-bit quantization model, since the quantizer \eqref{Eq_Quantization_Real} (and, accordingly, \eqref{Eq_Complex_Quantization}) is a uniform quantizer with infinitely many output levels for (potentially) unbounded inputs.
From the perspective of data storage and transmission, this is not fully satisfactory.
Therefore, in this section we further consider Toeplitz covariance estimation based on \emph{finite-bit} quantized data.

For unbounded data, a common approach is to truncate the data first \cite{chen2023quantizing}, and then apply quantization using \eqref{Eq_Quantization_Real}. This is equivalent to defining the following $k$-bit quantizer\footnote{In fact, $L$-level quantization is equivalent to setting $k = \log_2 L$.}
\begin{equation}
    \mathcal{Q}_{\Delta, k} (x) :=
    \begin{cases}
        (2^{k-1} + 1/2) \Delta, &\quad \mathrm{if}\  x \geq (2^{k-1} - 1) \Delta,\\
        -(2^{k-1} + 1/2) \Delta, &\quad \mathrm{if}\  x < (1 - 2^{k-1}) \Delta,\\
        \mathcal{Q}_{\Delta}(x), &\quad \mathrm{otherwise},
    \end{cases}
\end{equation}
and its corresponding $2k$-bit complex quantizer
\begin{equation}
    \label{Eq_2k_bit_complex_quantizer}
    \mathcal{Q}_{\Delta, 2k} (z) := \mathcal{Q}_{\Delta, k} \left(\Re(z)\right) + i \times \mathcal{Q}_{\Delta, k} \left(\Im(z)\right).
\end{equation}
In \eqref{Eq_2k_bit_complex_quantizer}, we allocate an equal number of bits to quantize the real and imaginary parts and employ identical quantization levels, i.e., we set $\boldsymbol{\Delta} = (\Delta, \Delta)$.
This choice is motivated by the symmetric impact of real and imaginary quantization on covariance estimation, as shown in the preceding section.
More general configurations (with possibly different bit allocations or quantization levels) can be analyzed in a similar manner.
We denote by $\widehat{\boldsymbol{T}}_{2k}$ the estimator obtained by applying \eqref{Eq_estimator_gamma} and \eqref{Eq_estimator_T} to the quantized data generated by \eqref{Eq_2k_bit_complex_quantizer}, and refer to it as the \textit{$2k$-bit Toeplitz-projected sample covariance matrix ($2k$-TSCM)}.

We observe that $\widehat{\boldsymbol{T}}_{2k} = \widehat{\boldsymbol{T}}$ holds whenever
\begin{equation}
    \label{Eq_equality_2k}
    \mathcal{Q}_{\Delta, 2k} (z_j^{(l)} + \tau_j^{(l)}) = \mathcal{Q}_{\Delta, \Delta} (z_j^{(l)} + \tau_j^{(l)})
\end{equation}
for all $j \in \Omega$ and $l = 1,2,\ldots,n$.
In fact, \eqref{Eq_equality_2k} is guaranteed if
\begin{equation}
    \label{Eq_max_z_plus_tau}
    \begin{aligned}
        \Delta &\geq 2^{1-k} \cdot \max_{j\in \Omega, 1\leq l \leq n} \left\{\left| z_j^{(l)} + \tau_j^{(l)} \right|\right\}\\
        &\geq 2^{1-k} \cdot \max_{j\in \Omega, 1\leq l \leq n} \left\{\left| \Re\left(z_j^{(l)} + \tau_j^{(l)}\right) \right|, \left| \Im\left(z_j^{(l)} + \tau_j^{(l)}\right) \right|\right\}.
    \end{aligned}
\end{equation}
This observation suggests that choosing $\Delta$ sufficiently large in \eqref{Eq_2k_bit_complex_quantizer} ensures that $\widehat{\boldsymbol{T}}_{2k}$ and $\widehat{\boldsymbol{T}}$ coincide with high probability and therefore share the same error bounds.
More precisely, we establish the following theorem.

\begin{theorem}
    \label{Th_bound_2k_bit_T}
    Under the setting of Theorem~\ref{Th_bound_T}, for any sufficiently large $\delta, \delta' > 0$, if we set $\boldsymbol{\Delta} = (\Delta, \Delta)$ and choose
    \begin{equation}
        \Delta = C_{bit} \cdot 2^{2-k} \sqrt{\gamma_0 \left(\log\left(n |\Omega|\right) + \delta'\right)},
    \end{equation}
    with $C_{bit} > 0$. Then, there exists a universal constant $C' > 0$ such that
    \begin{equation}
        \label{Eq_bound_2k_bit_T}
        \|\widehat{\boldsymbol{T}}_{2k} - \boldsymbol{T}\|_2 \leq C' K^2 \sqrt{\dfrac{\phi(\Omega) \cdot \delta \log d}{n}}
    \end{equation}
    holds with probability at least $1 - e^{-2\delta} - e^{-\delta'}$, where $K = \|\boldsymbol{T}\|_2^{1/2} + 2\sqrt{2} \Delta$ is the constant defined in \eqref{Eq_K}.
    In particular, if
    \begin{equation}
        n \geq \dfrac{C_n}{|\Omega|} \exp \left(\dfrac{2^{2k}\|\boldsymbol{T}\|_2}{\gamma_0} - \delta'\right),
    \end{equation}
    for a sufficiently large constant $C_n > 0$, then it holds that
    \begin{equation}
        \|\widehat{\boldsymbol{T}}_{2k} - \boldsymbol{T}\|_2 \lesssim \dfrac{\gamma_0 \left(\log\left(n |\Omega|\right) + \delta'\right)}{2^{2k}} \sqrt{\dfrac{\phi(\Omega) \cdot \delta \log d}{n}}
    \end{equation}
    with the same probability.
\end{theorem}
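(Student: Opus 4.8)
The plan is to exploit the coupling between the finite-bit estimator $\widehat{\boldsymbol{T}}_{2k}$ and the infinite-bit estimator $\widehat{\boldsymbol{T}}$ already identified before the statement: on the event where no pre-quantization sample is clipped by the truncation, the two estimators coincide \emph{exactly}, so the bound \eqref{Eq_bound_2k_bit_T} is inherited verbatim from Theorem~\ref{Th_bound_T}. Concretely, I would introduce the ``no-overflow'' event $\mathcal{E}$ on which the sufficient condition \eqref{Eq_max_z_plus_tau} holds for every $j\in\Omega$ and $l\le n$; on $\mathcal{E}$ we have $\widehat{\boldsymbol{T}}_{2k}=\widehat{\boldsymbol{T}}$ through \eqref{Eq_equality_2k}. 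The argument then reduces to (i) showing $\mathbb{P}(\mathcal{E})\ge 1-e^{-\delta'}$ for the prescribed $\Delta$, and (ii) intersecting $\mathcal{E}$ with the high-probability event of Theorem~\ref{Th_bound_T}.

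For step (i), since $\boldsymbol{z}\sim\mathcal{CN}(0,\boldsymbol{T})$ has diagonal $\gamma_0$, each entry is marginally $z_j^{(l)}\sim\mathcal{CN}(0,\gamma_0)$, so that $\mathbb{P}(|z_j^{(l)}|>t)=e^{-t^2/\gamma_0}$. A union bound over the $n|\Omega|$ observed entries gives $\mathbb{P}(\max_{j,l}|z_j^{(l)}|>t)\le n|\Omega|\,e^{-t^2/\gamma_0}$, and choosing $t=M:=\sqrt{\gamma_0(\log(n|\Omega|)+\delta')}$ makes this at most $e^{-\delta'}$. On this event, using that the triangular dither is bounded, $|\tau_j^{(l)}|\le\sqrt 2\,\Delta$, we obtain $\max_{j,l}|z_j^{(l)}+\tau_j^{(l)}|\le M+\sqrt 2\,\Delta$, and it remains to verify that $\Delta=C_{bit}2^{2-k}M$ satisfies the requirement $\Delta\ge 2^{1-k}(M+\sqrt 2\,\Delta)$ of \eqref{Eq_max_z_plus_tau}. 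This reduces to the scalar inequality $2C_{bit}(1-\sqrt 2\,2^{1-k})\ge 1$, which holds for $k\ge 2$ once $C_{bit}$ is a sufficiently large universal constant; hence $\mathcal{E}$ occurs with probability at least $1-e^{-\delta'}$ and $\widehat{\boldsymbol{T}}_{2k}=\widehat{\boldsymbol{T}}$ there.

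For step (ii), Theorem~\ref{Th_bound_T} applied with $\boldsymbol{\Delta}=(\Delta,\Delta)$, so that $\|\boldsymbol{\Delta}\|_2=\sqrt 2\,\Delta$ and $K=\|\boldsymbol{T}\|_2^{1/2}+2\sqrt 2\,\Delta$ as in \eqref{Eq_K}, yields $\|\widehat{\boldsymbol{T}}-\boldsymbol{T}\|_2\le C'K^2\sqrt{\phi(\Omega)\,\delta\log d/n}$ with probability at least $1-e^{-2\delta}$. A union bound over the complement of $\mathcal{E}$ and this event gives, with probability at least $1-e^{-2\delta}-e^{-\delta'}$, that $\|\widehat{\boldsymbol{T}}_{2k}-\boldsymbol{T}\|_2=\|\widehat{\boldsymbol{T}}-\boldsymbol{T}\|_2\le C'K^2\sqrt{\phi(\Omega)\,\delta\log d/n}$, which is exactly \eqref{Eq_bound_2k_bit_T}.

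Finally, for the \emph{in particular} claim I would simplify $K$ under the stated lower bound on $n$. Since $\Delta^2\asymp 2^{-2k}\gamma_0(\log(n|\Omega|)+\delta')$, the hypothesis $n\ge (C_n/|\Omega|)\exp\!\big(2^{2k}\|\boldsymbol{T}\|_2/\gamma_0-\delta'\big)$ is precisely what guarantees $\log(n|\Omega|)+\delta'\gtrsim 2^{2k}\|\boldsymbol{T}\|_2/\gamma_0$, i.e.\ $\|\boldsymbol{T}\|_2\lesssim\Delta^2$. Thus the covariance term is dominated by the dither term, $K=\|\boldsymbol{T}\|_2^{1/2}+2\sqrt 2\,\Delta\asymp\Delta$, giving $K^2\asymp 2^{-2k}\gamma_0(\log(n|\Omega|)+\delta')$; substituting into \eqref{Eq_bound_2k_bit_T} produces the claimed rate. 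I expect the main obstacle to be step (i): one must make the self-referential inequality for $\Delta$ (the dither magnitude scales with the very $\Delta$ being chosen) come out cleanly for all $k\ge 2$ with a single universal $C_{bit}$, and one must be careful that the truncation-free condition \eqref{Eq_max_z_plus_tau} is genuinely sufficient for the exact identity \eqref{Eq_equality_2k}, so that no residual clipping bias leaks into the bound.
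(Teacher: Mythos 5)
Your proposal is correct and follows essentially the same route as the paper's proof: couple $\widehat{\boldsymbol{T}}_{2k}$ with $\widehat{\boldsymbol{T}}$ on the no-overflow event, bound its failure probability via the Rayleigh tail $\mathbb{P}(|z_j^{(l)}|>t)=e^{-t^2/\gamma_0}$, a union bound over the $n|\Omega|$ observed entries, and the bounded dither $|\tau_j^{(l)}|\le\sqrt{2}\Delta$, then intersect with the high-probability event of Theorem~\ref{Th_bound_T}. Your explicit resolution of the self-referential inequality for $\Delta$ (via a sufficiently large universal $C_{bit}$, valid for $k\ge 2$) and your derivation of the \emph{in particular} claim from $\|\boldsymbol{T}\|_2\lesssim\Delta^2 \Rightarrow K\asymp\Delta$ are exactly the steps the paper either absorbs into constants or defers to prior work.
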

\begin{proof}
    See Appendix~\ref{Pr_bound_2k_bit_T}
\end{proof}

The above results indicate that, when $n$ is sufficiently large, the estimation error of $\widehat{\boldsymbol{T}}_{2k}$ is essentially governed by the main diagonal entry $\gamma_0$ of $\boldsymbol{T}$, rather than by its spectral norm.
To further elucidate Theorem~\ref{Th_bound_2k_bit_T}, we consider the $4$-bit estimator $\widehat{\boldsymbol{T}}_{4}$ (i.e., $k=2$) under the ruler $\Omega_\alpha$ and provide the following corollary.

\begin{corollary}
    \label{Co_bound_4_bit_Omega_alpha}
    Under the setting of Theorem~\ref{Th_bound_2k_bit_T}, if the ruler $\Omega_\alpha, \alpha \in [1/2, 1]$ is used, then
    \begin{equation}
        \begin{aligned}
            \|\widehat{\boldsymbol{T}}_{4} - \boldsymbol{T}\|_2 \lesssim& \gamma_0 \log\left(n d^\alpha\right) \\
            & \quad \cdot \sqrt{\dfrac{\max\{d^{2-2\alpha}, d^{1-\alpha} \log d\} \cdot \log d}{n}}.
        \end{aligned}
    \end{equation}
    holds with high probability.
    In particular, if the full ruler $\Omega_1$ is applied, then
    \begin{equation}
        \|\widehat{\boldsymbol{T}}_{4} - \boldsymbol{T}\|_2 \lesssim \gamma_0 \log\left(n d\right) \cdot \log d \cdot \sqrt{\dfrac{1}{n}}.
    \end{equation}
\end{corollary}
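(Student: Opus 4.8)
The plan is to obtain Corollary~\ref{Co_bound_4_bit_Omega_alpha} as a direct specialization of the second (sufficiently-large-$n$) bound in Theorem~\ref{Th_bound_2k_bit_T}, instantiating $k=2$ so that $\widehat{\boldsymbol{T}}_{2k}=\widehat{\boldsymbol{T}}_4$, taking the ruler $\Omega=\Omega_\alpha$, and then substituting the estimates for $|\Omega_\alpha|$ and $\phi(\Omega_\alpha)$ supplied by Lemma~\ref{La_phi_Omega}. No new probabilistic argument is required: the high-probability event and its failure probability $1-e^{-2\delta}-e^{-\delta'}$ are already furnished by Theorem~\ref{Th_bound_2k_bit_T}, and throughout I read ``with high probability'' as treating the fixed parameters $\delta,\delta'$ as constants that may be absorbed into the implied constant of $\lesssim$.

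First I would invoke the regime of Theorem~\ref{Th_bound_2k_bit_T} in which $n$ is large enough that
\[
n \geq \dfrac{C_n}{|\Omega_\alpha|}\exp\left(\dfrac{2^{2k}\|\boldsymbol{T}\|_2}{\gamma_0}-\delta'\right),
\]
which for $k=2$ and $|\Omega_\alpha|=\mathcal{O}(d^\alpha)$ is a concrete ``sufficiently large $n$'' condition; under it the theorem gives
\[
\|\widehat{\boldsymbol{T}}_4-\boldsymbol{T}\|_2 \lesssim \dfrac{\gamma_0\left(\log(n|\Omega_\alpha|)+\delta'\right)}{2^4}\sqrt{\dfrac{\phi(\Omega_\alpha)\cdot\delta\log d}{n}}.
\]
The factor $2^4=16$ is a universal constant and is absorbed into $\lesssim$. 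Next I would apply Lemma~\ref{La_phi_Omega}: since $|\Omega_\alpha|=\mathcal{O}(d^\alpha)$, we have $\log(n|\Omega_\alpha|)+\delta'\lesssim\log(nd^\alpha)$ (with $\delta'$ absorbed as a constant), and since $\phi(\Omega_\alpha)\lesssim d^{2-2\alpha}+d^{1-\alpha}\log d\asymp\max\{d^{2-2\alpha},d^{1-\alpha}\log d\}$, substitution yields exactly
\[
\|\widehat{\boldsymbol{T}}_4-\boldsymbol{T}\|_2\lesssim\gamma_0\log(nd^\alpha)\sqrt{\dfrac{\max\{d^{2-2\alpha},d^{1-\alpha}\log d\}\cdot\log d}{n}},
\]
which is the claimed general bound. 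For the particular case $\alpha=1$ I would simply evaluate the $\max$: here $d^{2-2\alpha}=d^0=1$ and $d^{1-\alpha}\log d=\log d$, so $\max\{1,\log d\}=\log d$ for $d\geq 2$, and $\log(nd^\alpha)=\log(nd)$. Inserting these and pulling the $\log d$ out of the square root (since $\sqrt{\log d\cdot\log d}=\log d$) gives
\[
\|\widehat{\boldsymbol{T}}_4-\boldsymbol{T}\|_2\lesssim\gamma_0\log(nd)\cdot\log d\cdot\sqrt{\dfrac{1}{n}},
\]
as stated.

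I anticipate no substantive obstacle, since the argument is an arithmetic specialization of an already-established theorem. The only points requiring care are bookkeeping rather than analysis: confirming that the ``sufficiently large $n$'' hypothesis of Theorem~\ref{Th_bound_2k_bit_T} is precisely the regime in which the simplified bound is asserted, and verifying that the suppressed dependence on $\delta,\delta'$ and the constant $2^4$ are legitimately absorbed into $\lesssim$ under the ``high probability'' reading. Evaluating $\max\{d^{2-2\alpha},d^{1-\alpha}\log d\}$ at the endpoint $\alpha=1$ is the one spot where a trivial case distinction (between $1$ and $\log d$) appears, and it is resolved by assuming $d\geq 2$.
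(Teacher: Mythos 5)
Your proof is correct and follows essentially the same route as the paper: the paper's own proof is precisely the one-line specialization of Theorem~\ref{Th_bound_2k_bit_T} (with $k=2$, $\Omega=\Omega_\alpha$) combined with the bounds $|\Omega_\alpha|=\mathcal{O}(d^\alpha)$ and $\phi(\Omega_\alpha)\lesssim d^{2-2\alpha}+d^{1-\alpha}\log d$ from Lemma~\ref{La_phi_Omega}. Your explicit bookkeeping (absorbing $2^4$, $\delta$, $\delta'$ into the implied constant and evaluating the $\max$ at $\alpha=1$) simply spells out what the paper leaves implicit.
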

\begin{proof}
    The result follows directly from Theorem~\ref{Th_bound_2k_bit_T} by combining it with the bound on $\phi(\Omega_\alpha)$ in Lemma~\ref{La_phi_Omega}.
\end{proof}

Under full observation, i.e., set $\alpha = 1$, Corollary~\ref{Co_bound_4_bit_Omega_alpha} yields a tighter covariance estimation bound.
Specifically, \cite[Theorem~4]{chen2025parameter} considers a closely related triangular-quantization-based covariance estimation setting (without Toeplitz-projection), and establishes the bound\footnote{This result is stated for the real-valued case. It is straightforward to be extended to the complex-valued setting by a naive real-lifting approach, up to constant factors.}
\begin{equation*}
    \|\widehat{\boldsymbol{T}}_{2b} - \boldsymbol{T}\|_2 \lesssim \sqrt{\dfrac{d\|\boldsymbol{T}\|_\infty \|\boldsymbol{T}\|_2 \log(nd)^2}{n}}.
\end{equation*}
With respect to the dependence on the dimension $d$, our bound in Corollary~\ref{Co_bound_4_bit_Omega_alpha} is asymptotically tighter, since it involves only polylogarithmic factors in $d$, whereas the above bound scales as $\sqrt{d}$.
Moreover, in regimes where $\|\boldsymbol{T}\|_2 \gg \gamma_0$, i.e., when the energy of $\boldsymbol{T}$ is highly concentrated, Corollary~\ref{Co_bound_4_bit_Omega_alpha} also provides a strictly better dependence on the covariance matrix $\boldsymbol{T}$ itself.
Similar improvements are observed when comparing our results with those in \cite{dirksen2025subspace, dirksen2024tuning}.
These gains are attributable to the exploitation of the Toeplitz structure.

\begin{figure}[tbp]
	\centering
	\includegraphics[scale=0.21]{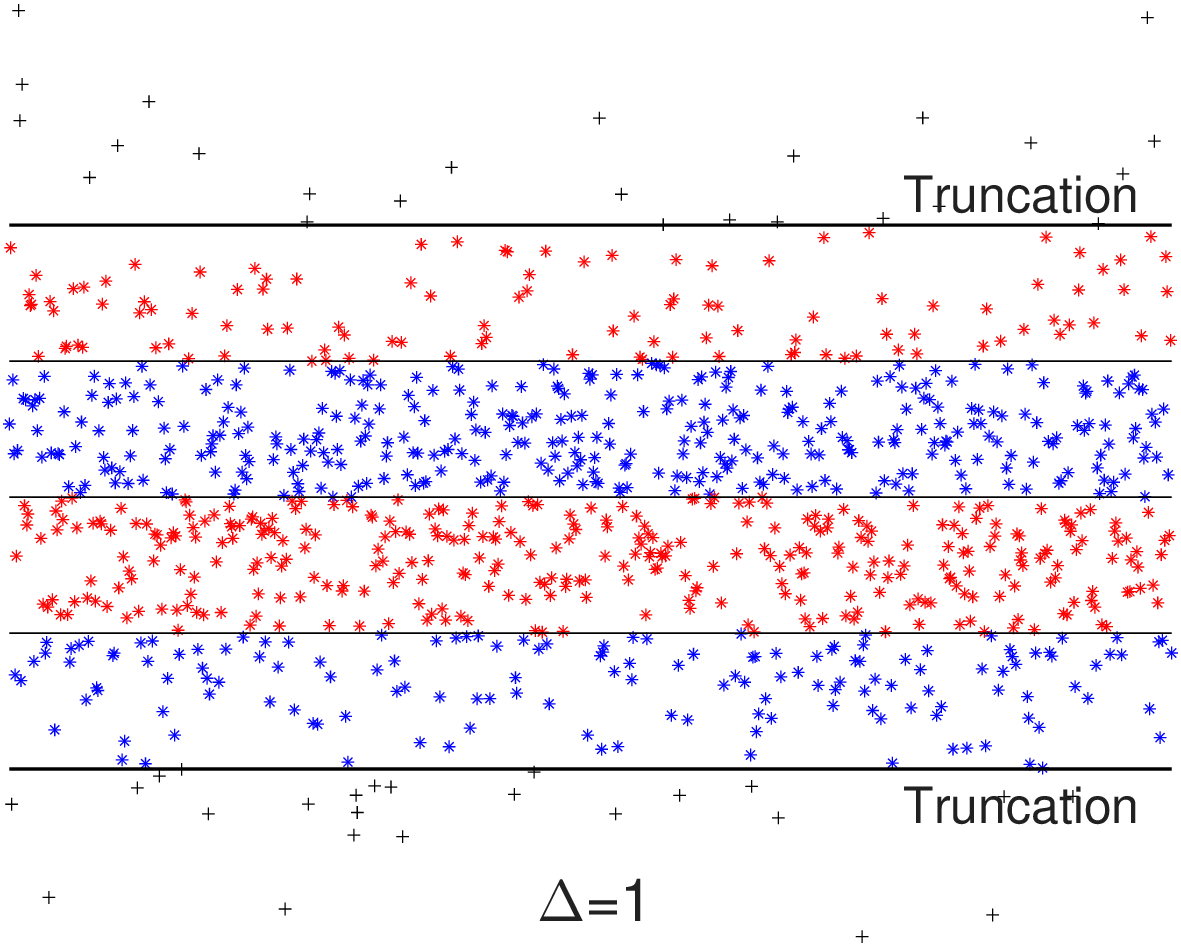}
    \includegraphics[scale=0.21]{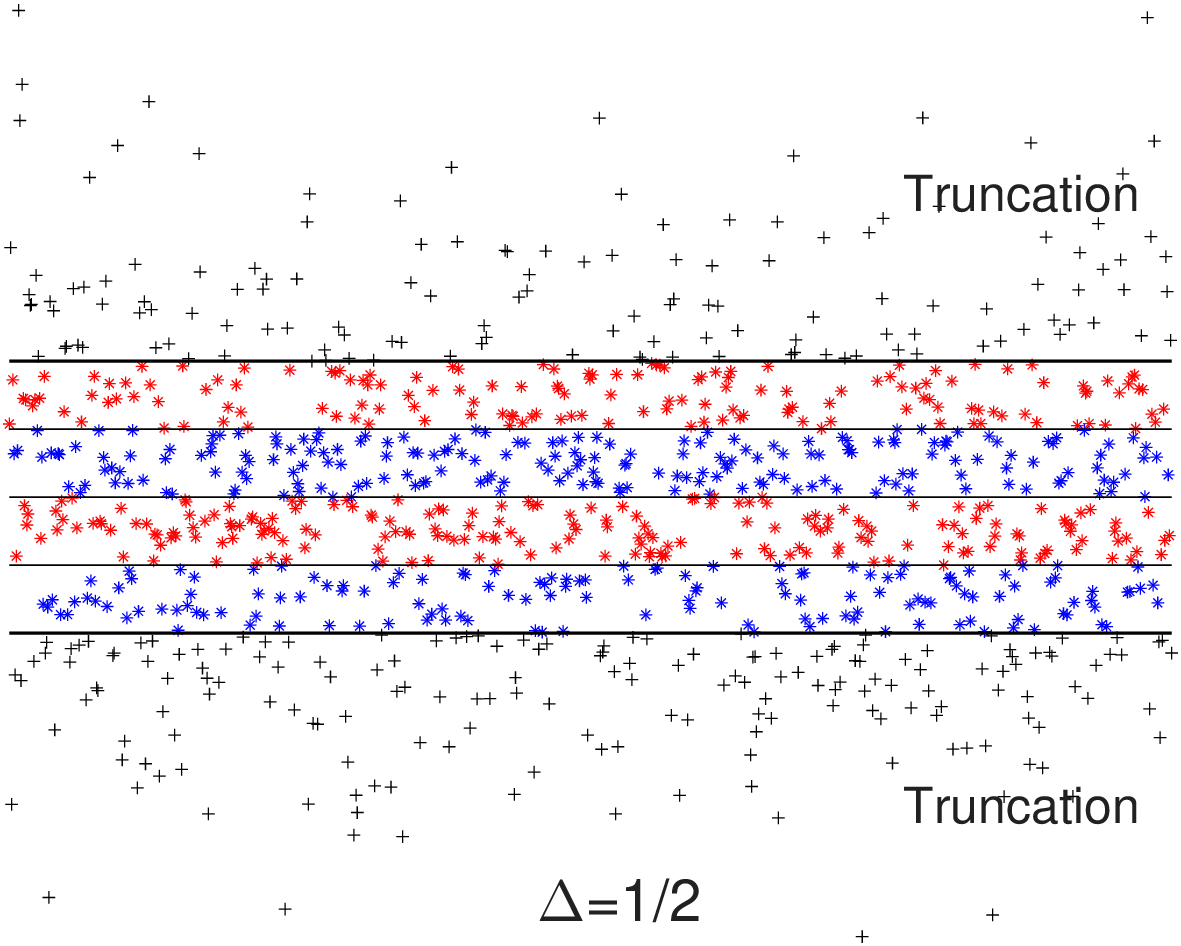}
	\caption{Illustration of the effect of the quantization level $\Delta$ on the quantization of samples drawn from a standard normal distribution, for a fixed bit depth $k=2$.}
	\label{Fig_selection_of_delta}
\end{figure}

In quantized covariance estimation, the quantization level $\Delta$ must be specified a priori.
This choice is nontrivial, since selecting $\Delta$ either too large or too small can substantially increase the estimation error.
As illustrated in Fig.~\ref{Fig_selection_of_delta}, for a fixed bit budget $k$, a large $\Delta$ results in coarse quantization and hence larger quantization error.
Conversely, if $\Delta$ is too small, a substantial fraction of the observations are clipped at the outermost quantization levels, leading to severe information loss and degraded estimation accuracy.
The approach in Theorem~\ref{Th_bound_2k_bit_T} depends on $\gamma_0$, which is typically unknown in practice.
Consequently, developing a parameter-free rule for selecting $\Delta$ merits further investigation.
For Q-SCM, \cite{chen2025parameter, dirksen2024tuning} propose data-driven strategies, such as setting
\begin{equation}
    \Delta = \max_{1 \leq j \leq d, 1 \leq l \leq n} \left|z_j^{(l)}\right|,
\end{equation}
which remains applicable in the Toeplitz setting considered here.
Identifying a principled quantization-level selection scheme tailored to Toeplitz covariance estimation remains an open problem.

% \begin{remark}
%     DOA estimation provides a straightforward application of Toeplitz covariance estimation.
%     Building on the estimated Toeplitz covariance matrix, one can perform DOA estimation using subspace-based approaches such as MUSIC or ESPRIT.
%     Moreover, once an error bound for the covariance estimator is available, the $\sin\Theta$ theorem \cite{davis1970rotation} can be used to characterize the resulting subspace perturbation and thereby derive bounds on the DOA estimation error, as in \cite{cai2018rate, li2022stability, yang2022nonasymptotic}.
%     This analysis is beyond the main scope of the present work and is therefore not pursued here; instead, we illustrate this application via numerical experiments in Section~\ref{Se_Experiments}.
% \end{remark}

\section{Performance Enhancement via Algorithm} \label{Se_Algorithm}
The Q-TSCM and 2k-TSCM estimators exploit the Toeplitz structure; however, they are not guaranteed to be positive semidefinite in general, which might be of crucial concern in applications.
In the case of unquantized data, existing algorithms, e.g., SPA \cite{yang2014discretization}, have been proposed to utilize the Toeplitz and positive semidefiniteness structures simultaneously to improve the estimation accuracy.
Motivated by SPA, we propose an analogous algorithm, termed the quantized SPA (Q-SPA), for Toeplitz covariance estimation from quantized compressive data.

Under Assumption~\ref{As_A1}, for a prescribed quantization level $\boldsymbol{\Delta} > 0$ and ruler $\Omega$, it follows from Theorem~\ref{Th_Unbiased} that
\begin{equation}
    \mathbb{E}\!\left[ \dot{\boldsymbol{z}}_\Omega \dot{\boldsymbol{z}}_\Omega^H \right] = \dot{\boldsymbol{R}}_\Omega,
\end{equation}
where $\dot{\boldsymbol{R}} = \boldsymbol{T} + \frac{\|\boldsymbol{\Delta}\|_2^2}{4} \boldsymbol{I}_d$ denotes the covariance matrix of the quantized samples, which remains Toeplitz.
We define the sample covariance matrix of the incomplete quantized observations as
\begin{equation}
    \widehat{\boldsymbol{R}}_\Omega = \frac{1}{n} \sum_{l=1}^{n} \dot{\boldsymbol{z}}_\Omega^{(l)} \big(\dot{\boldsymbol{z}}_\Omega^{(l)}\big)^H .
\end{equation}
Adopting the covariance-fitting criterion (metric) used in SPA and accounting for quantization effects, we obtain the following formulation:
\begin{equation}
    \label{Eq_SPA_1}
    \begin{aligned}
        \min_{\boldsymbol{u}}& \ \left\| \dot{\boldsymbol{R}}^{-\frac{1}{2}}_\Omega \big(\widehat{\boldsymbol{R}}_\Omega - \dot{\boldsymbol{R}}_\Omega\big) \widehat{\boldsymbol{R}}^{-\frac{1}{2}}_\Omega \right\|_F^2,\\
        \text{s.t.}& \ \dot{\boldsymbol{R}} = \mathcal{T}(\boldsymbol{u}) \succeq \dfrac{\|\boldsymbol{\Delta}\|_2^2}{4} \boldsymbol{I}_d.
    \end{aligned}
\end{equation}
The objective is adopted from \cite{stoica2010spice} and subsequently used in \cite{yang2014discretization} as a surrogate for the weighted least-squares metric.
The constraint reflects the quantization-induced bias term $\frac{\|\boldsymbol{\Delta}\|_2^2}{4} \boldsymbol{I}_d$ in $\dot{\boldsymbol{R}}$, and ensures the positive semidefiniteness of $\boldsymbol{T}$.

A direct calculation shows that \eqref{Eq_SPA_1} is equivalent to \cite{yang2014discretization}:
\begin{equation}
    \label{Eq_SPA_2}
    \begin{aligned}
        \min_{\boldsymbol{u}}& \ \text{tr}\!\left(\widehat{\boldsymbol{R}}_\Omega^{-1} \dot{\boldsymbol{R}}_\Omega\right) + \text{tr}\!\left(\dot{\boldsymbol{R}}_\Omega^{-1} \widehat{\boldsymbol{R}}_\Omega\right),\\
        \text{s.t.}& \ \dot{\boldsymbol{R}} = \mathcal{T}(\boldsymbol{u}) \succeq \dfrac{\|\boldsymbol{\Delta}\|_2^2}{4} \boldsymbol{I}_d.
    \end{aligned}
\end{equation}
Using the identity
\begin{equation}
    \text{tr}\!\left(\boldsymbol{X}^H \boldsymbol{Y}^{-1} \boldsymbol{X}\right) = \min_{\boldsymbol{U}} \text{tr}(\boldsymbol{U})\quad
    \text{s.t.} \quad \begin{pmatrix}
        \boldsymbol{U} & \boldsymbol{X}^H\\
        \boldsymbol{X} & \boldsymbol{Y}
    \end{pmatrix}
    \succeq \boldsymbol{0},
\end{equation}
the optimization problem \eqref{Eq_SPA_2} can be reformulated as the semidefinite program (SDP)
\begin{equation}
    \label{Eq_SPA_3}
    \begin{aligned}
        \min_{\boldsymbol{u}, \boldsymbol{U}}& \ \text{tr}\!\left(\widehat{\boldsymbol{R}}_\Omega^{-1} \dot{\boldsymbol{R}}_\Omega\right) + \text{tr}(\boldsymbol{U}),\\
        \text{s.t.}& \ \begin{pmatrix}
            \boldsymbol{U} & \widehat{\boldsymbol{R}}_\Omega^{\frac{1}{2}}\\
            \widehat{\boldsymbol{R}}_\Omega^{\frac{1}{2}} & \dot{\boldsymbol{R}}_\Omega
        \end{pmatrix}
        \succeq \boldsymbol{0},\\
        & \ \dot{\boldsymbol{R}} = \mathcal{T}(\boldsymbol{u}) \succeq \dfrac{\|\boldsymbol{\Delta}\|_2^2}{4} \boldsymbol{I}_d,
    \end{aligned}
\end{equation}
which can be solved using standard SDP solvers (e.g., CVX).
Once the solution $\breve{\boldsymbol{u}}$ to $\boldsymbol{u}$ is obtained from \eqref{Eq_SPA_3}, we define
\begin{equation}
    \breve{\boldsymbol{T}} = \mathcal{T}(\breve{\boldsymbol{u}}) - \dfrac{\|\boldsymbol{\Delta}\|_2^2}{4} \boldsymbol{I}_d
\end{equation}
as an estimator of $\boldsymbol{T}$.
We refer to $\breve{\boldsymbol{T}}$ as the Q-SPA estimator, in view of its similarity to SPA in \cite{yang2014discretization}.

It is shown in \cite{yang2018sparse} that SPA can also be interpreted as computing a weighted atomic norm of the square root of the covariance matrix, with the weight determined by the covariance matrix itself.
Therefore, SPA promotes sparse frequency recovery, which corresponds to estimating a low-rank Toeplitz covariance matrix under the Vandermonde decomposition.
Consequently, one may expect Q-SPA to yield improved covariance estimates, especially when the underlying Toeplitz covariance matrix $\boldsymbol{T}$ is (approximately) low-rank.
Numerical results will be provided to validate its improved accuracy compared to Q-TSCM, at the cost of increased computational complexity due to solving the SDP.

\section{Numerical Experiments} \label{Se_Experiments}

We validate the effectiveness of the proposed estimator through numerical experiments, and further demonstrate its practical applicability to DOA estimation.

\subsection{Quantized Toeplitz Covariance Estimation}

We first concentrate on the performance of covariance estimation, analyzing the impact of various factors on the error bounds.
In our numerical experiments, we randomly generate the true covariance matrix $\boldsymbol{T}$.
Specifically, we first sample $d$ distinct frequencies from the uniform distribution $\mathcal{U}(0,1)$ and construct the corresponding Fourier matrix $\boldsymbol{F}_d \in \mathbb{C}^{d \times d}$.
Subsequently, we sample $d$ independent realizations from the standard normal distribution $\mathcal{N}(0,1)$, take their absolute values as amplitudes, and construct the diagonal matrix $D$.
Finally, by exploiting the Vandermonde decomposition of Toeplitz matrices \cite{yang2016vandermonde}, we construct a random Toeplitz covariance matrix $\boldsymbol{T}$.
We then draw $n$ i.i.d. samples $\boldsymbol{z}^{(l)}, l = 1, 2, \ldots, n$, from the complex Gaussian distribution $\mathcal{CN}(\boldsymbol{0}, \boldsymbol{T})$ and obtain their triangularly quantized sub-observations $\dot{\boldsymbol{z}}^{(l)}_\Omega$, which are used for covariance estimation.
Each data point in our plots is set to be the mean value of 100 Monte Carlo simulations.

\begin{figure}[tbp]
	\centering
	\includegraphics[scale=0.32]{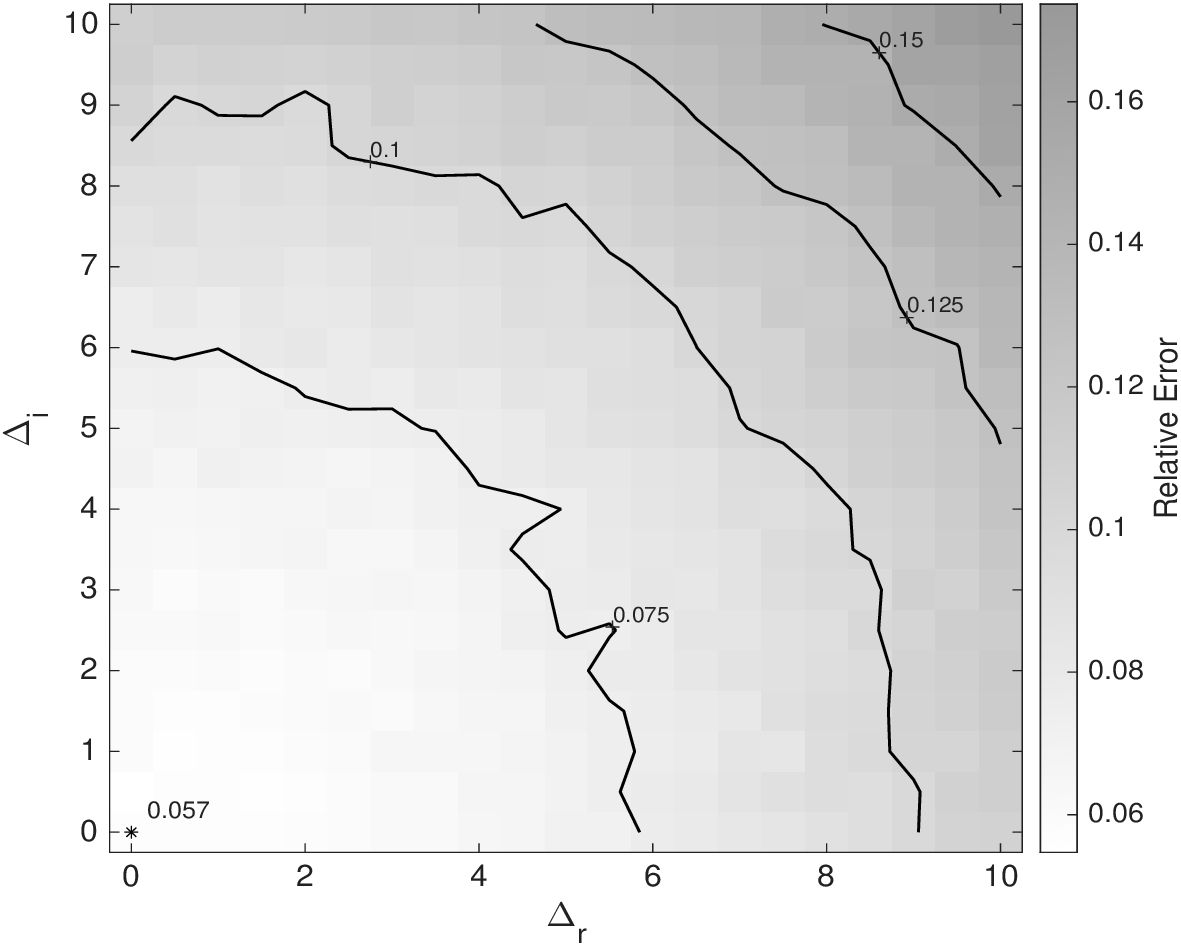}
	\caption{Impact of real and imaginary quantization levels on Toeplitz covariance estimation error.}
	\label{Fig_impact_of_delta}
\end{figure}

In \textit{Experiment 1}, we examine the impact of quantization on the estimation error in both the real and imaginary parts.
Specifically, we set $n=500$, $d=16$, and adopt the full ruler $\Omega_1$.
For varying real and imaginary quantization levels $\Delta_r$ and $\Delta_i$, we compute the relative error and plot the resulting error map with contour lines in Fig.~\ref{Fig_impact_of_delta}.
The contours at different levels are approximately circular arcs, indicating that the effects of real and imaginary quantization are nearly symmetric.
Meanwhile, the contours are relatively sparse near the origin and become denser away from it, consistent with the scaling $\mathcal{O}(\|\boldsymbol{\Delta}\|_2^2)$.
In the subsequent experiments we set $\Delta_r = \Delta_i = \Delta$.

\begin{figure}[tbp]
	\centering
	\includegraphics[scale=0.32]{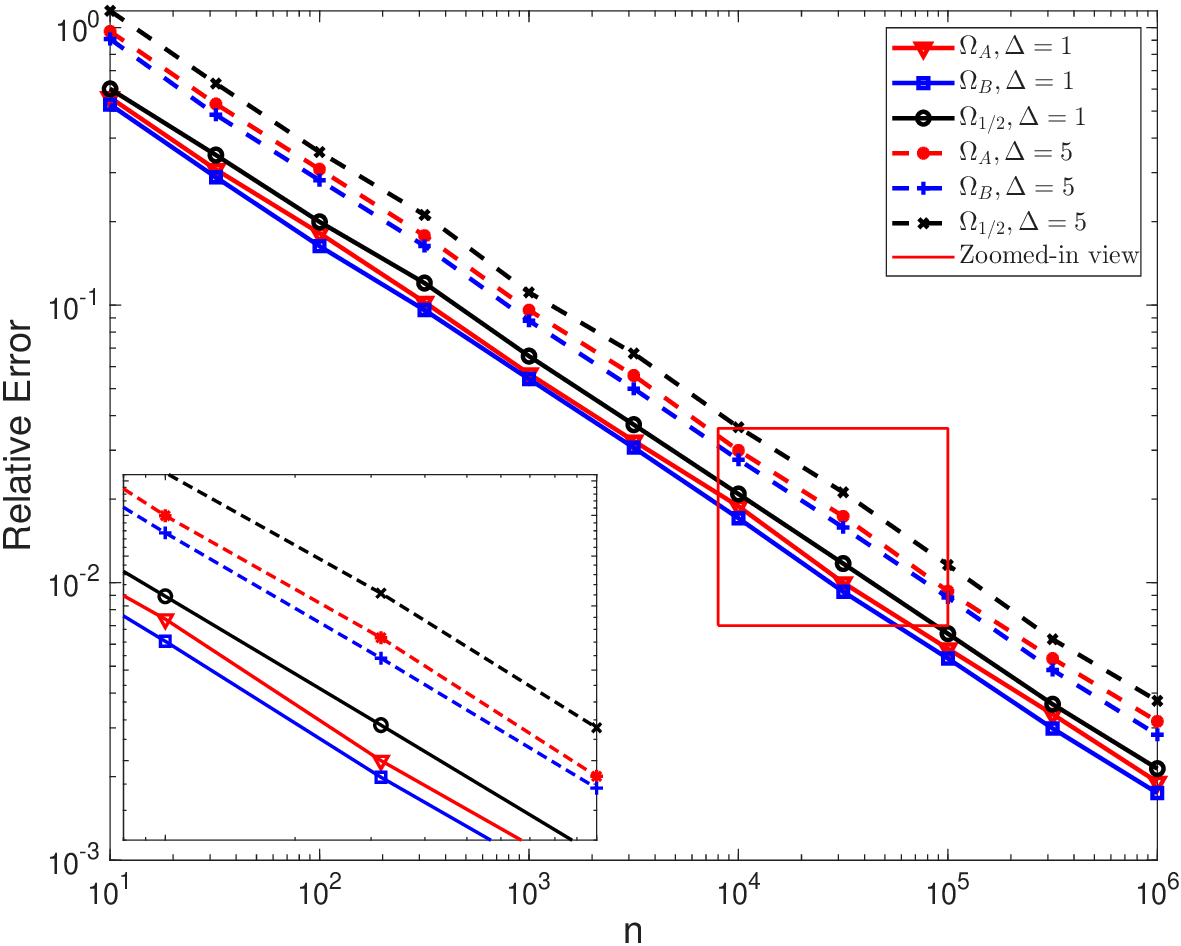}
	\caption{Impact of different rulers on Toeplitz covariance estimation error.}
	\label{Fig_impact_of_different_ruler}
\end{figure}

In \textit{Experiment 2}, we investigate the impact of different rulers. 
For $d=16$, we set $\Delta = 1$ or $5$ and consider three rulers
\begin{equation*}
    \begin{aligned}
        \Omega_A &= \{1, 2, 3, 4, 5, 6, 7, 8, 16\},\\
        \Omega_B &= \{1, 2, 3, 5, 8, 11, 14, 15, 16\},\\
        \Omega_{1/2} &= \{1, 2, 3, 4, 8, 12, 16\},
    \end{aligned}
\end{equation*}
where $|\Omega_{1/2}| = 7 < |\Omega_{A}| = |\Omega_{B}| = 9$, and
\begin{equation*}
    \phi(\Omega_B) < \phi(\Omega_A) < \phi(\Omega_{1/2}).
\end{equation*}
We plot the relative error versus the sample size $n$ on a log-log scale in Fig.~\ref{Fig_impact_of_different_ruler} {It should be noted that, although sample sizes in signal processing applications rarely approach $10^6$, we intentionally consider a very large range of $n$ to highlight a key distinction: for biased estimators the error typically decreases and then levels off, whereas our (unbiased) estimator continues to improve as $n$ increases.}.
Although $\Omega_A$ and $\Omega_B$ have the same cardinality, estimation based on $\Omega_B$ consistently yields smaller errors, which is consistent with its smaller coverage coefficient, as shown in Remark~\ref{Re_diff_ruler}.
By contrast, estimation based on $\Omega_{1/2}$ exhibits larger errors than the other two rulers, in line with $\phi(\Omega_{1/2})$ being substantially larger than $\Omega_A$ and $\Omega_B$.
Overall, these results demonstrate that different rulers can lead to markedly different estimation errors, and that this variation is well captured by the corresponding coverage coefficients, consistent with Remark~\ref{Re_diff_ruler}.

\begin{figure}[tbp]
  \centering
  \subfloat[Fix $d = 16$ and vary $n$]{%
    \includegraphics[scale=0.32]{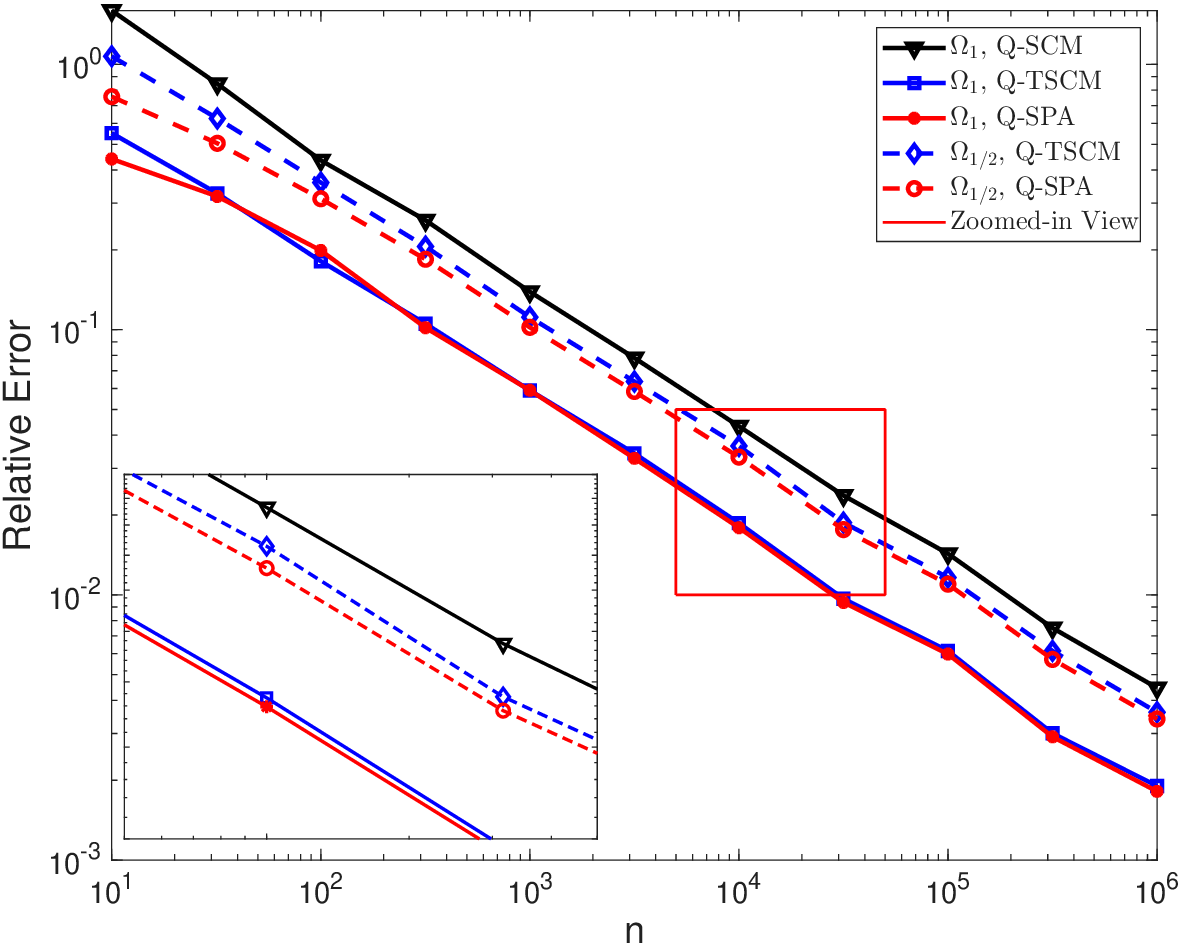}%
    \label{Fig_impact_n}}
  \hfill
  \subfloat[Fix $n = 500$ and vary $d$]{%
    \includegraphics[scale=0.32]{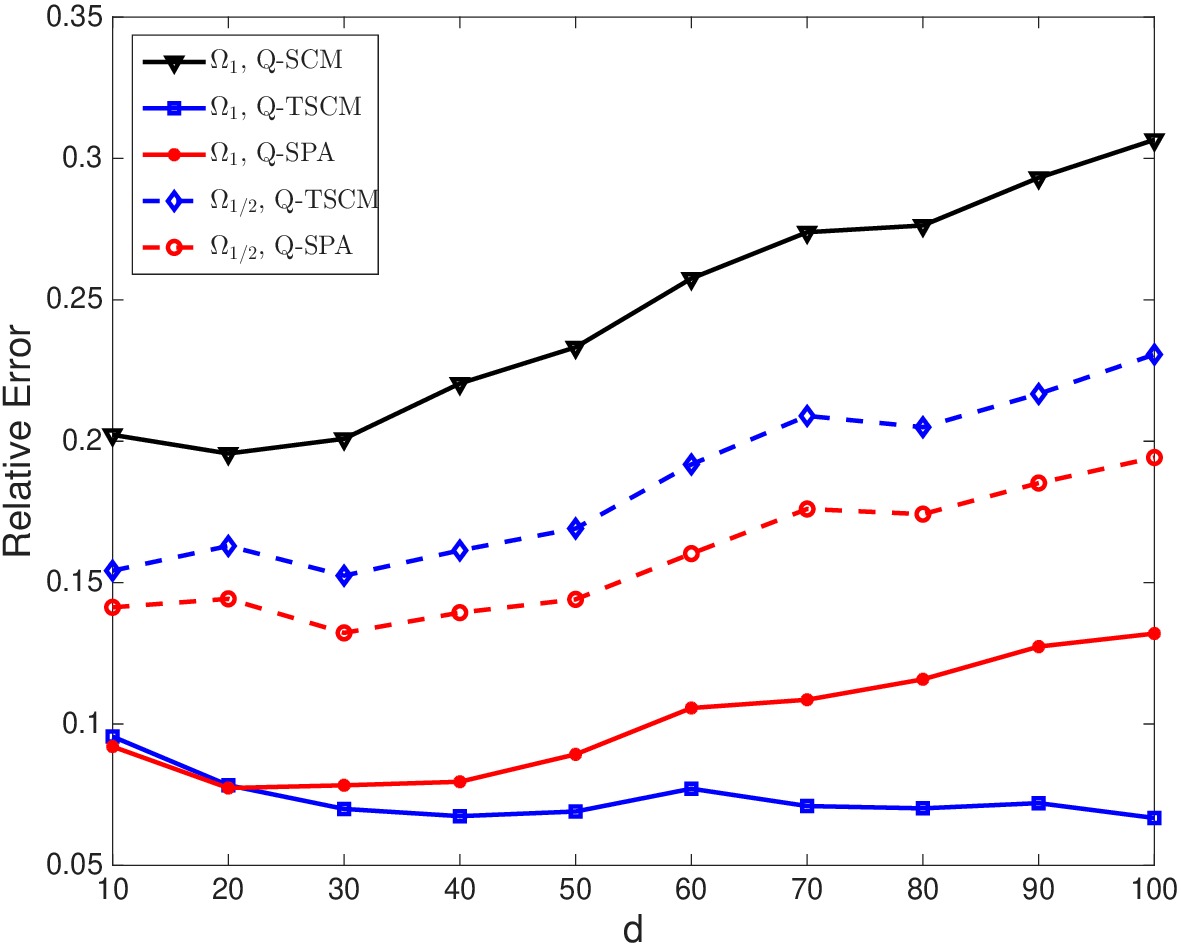}%
    \label{Fig_impact_d}}
  \caption{Impact of sample size $n$ and sample dimension $d$ on Toeplitz covariance estimation.}
  \label{Fig_impact_of_d_n}
\end{figure}

In \textit{Experiment 3}, we study the effects of the sample size $n$ and the dimension $d$.
We first fix $d=16$ and $\Delta = 5$, compute the relative error for varying $n$, and report the results on a log-log scale in Fig.~\ref{Fig_impact_n}\footnote{When $n$ is small, we introduce a diagonal perturbation $\epsilon \boldsymbol{I}_{|\Omega|}$ to the sample covariance matrix to compute its inverse before performing Q-SPA; after all, we know a priori that its expectation $\dot{\boldsymbol{R}} \succeq \frac{\|\boldsymbol{\Delta}\|_2^2}{4} \boldsymbol{I}_d$.}.
The curves are approximately linear with slope $-0.5$, consistent with the theoretical rate $\mathcal{O}(n^{-1/2})$.
For large $n$, Q-SPA typically achieves the best accuracy.
Notably, Q-SCM performs worse than Q-TSCM and can be inferior even to the sparse-observation case based on $\Omega_{1/2}$, highlighting the substantial gains achievable by exploiting the Toeplitz structure.
Next, fixing $n=500$, we plot the relative error as a function of $d$ in Fig.~\ref{Fig_impact_d}.
As $d$ increases, the errors of the Toeplitz-based estimators (Q-TSCM and Q-SPA) vary slowly; in particular, for fully observed Q-TSCM, the estimation error even decreases slightly.
This trend can be explained by the fact that, for a fixed lag $k$, a larger dimension provides more index pairs to estimate $\gamma_k$, thereby improving accuracy.
Moreover, in the fully observed scenario, when $d$ is large, Q-SPA performs less favourably than Q-TSCM.
One possible reason is that the condition number of the sample covariance matrix becomes large, whereas Q-SPA requires the use of the inverse of the sample covariance matrix, leading to reduced accuracy.
In more challenging partial-sampling regimes, however, Q-SPA yields markedly improved performance.

\begin{figure}[tbp]
  \centering
  \subfloat[Impact of $k$]{%
    \includegraphics[scale=0.32]{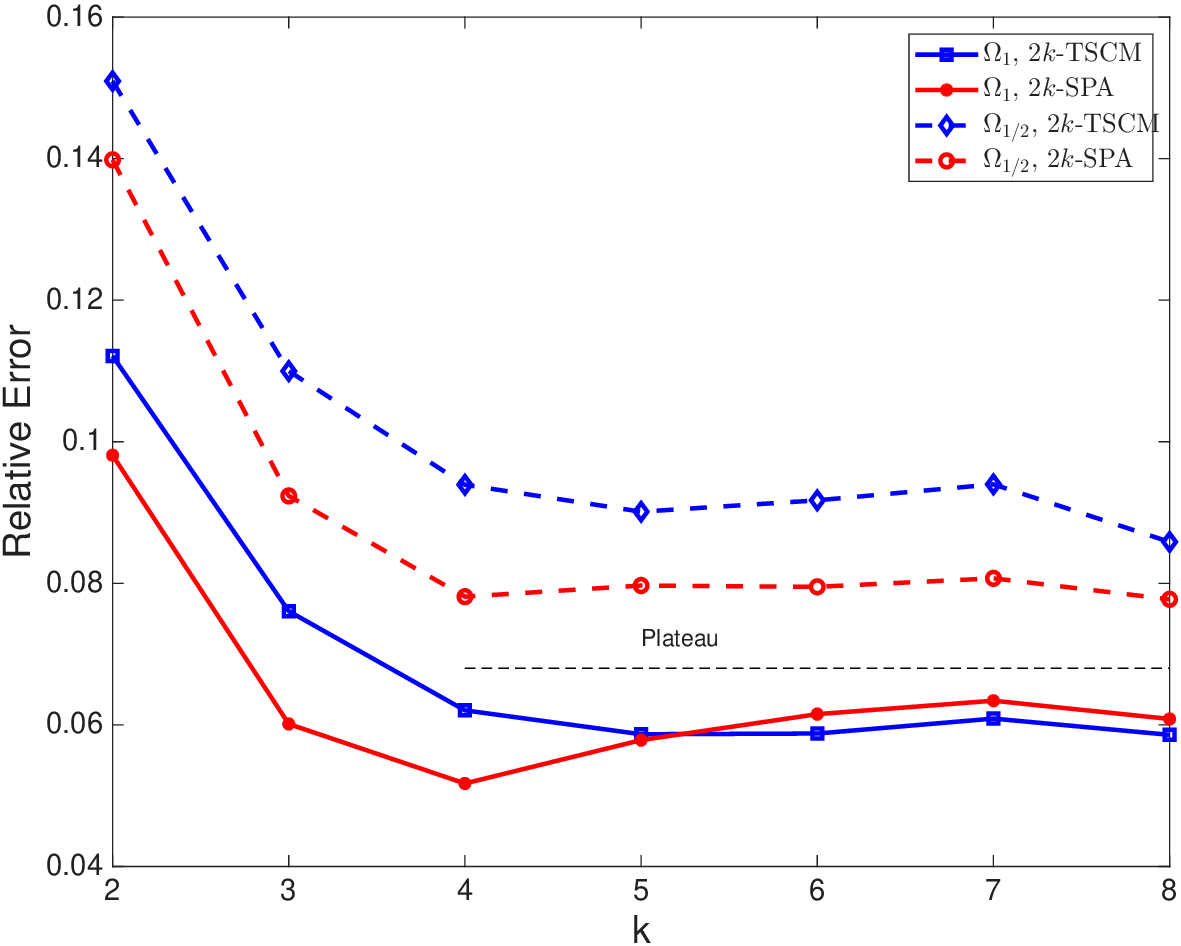}%
    \label{Fig_impact_k}}
  \hfill
  \subfloat[Comparison of 4-bit estimators]{%
    \includegraphics[scale=0.32]{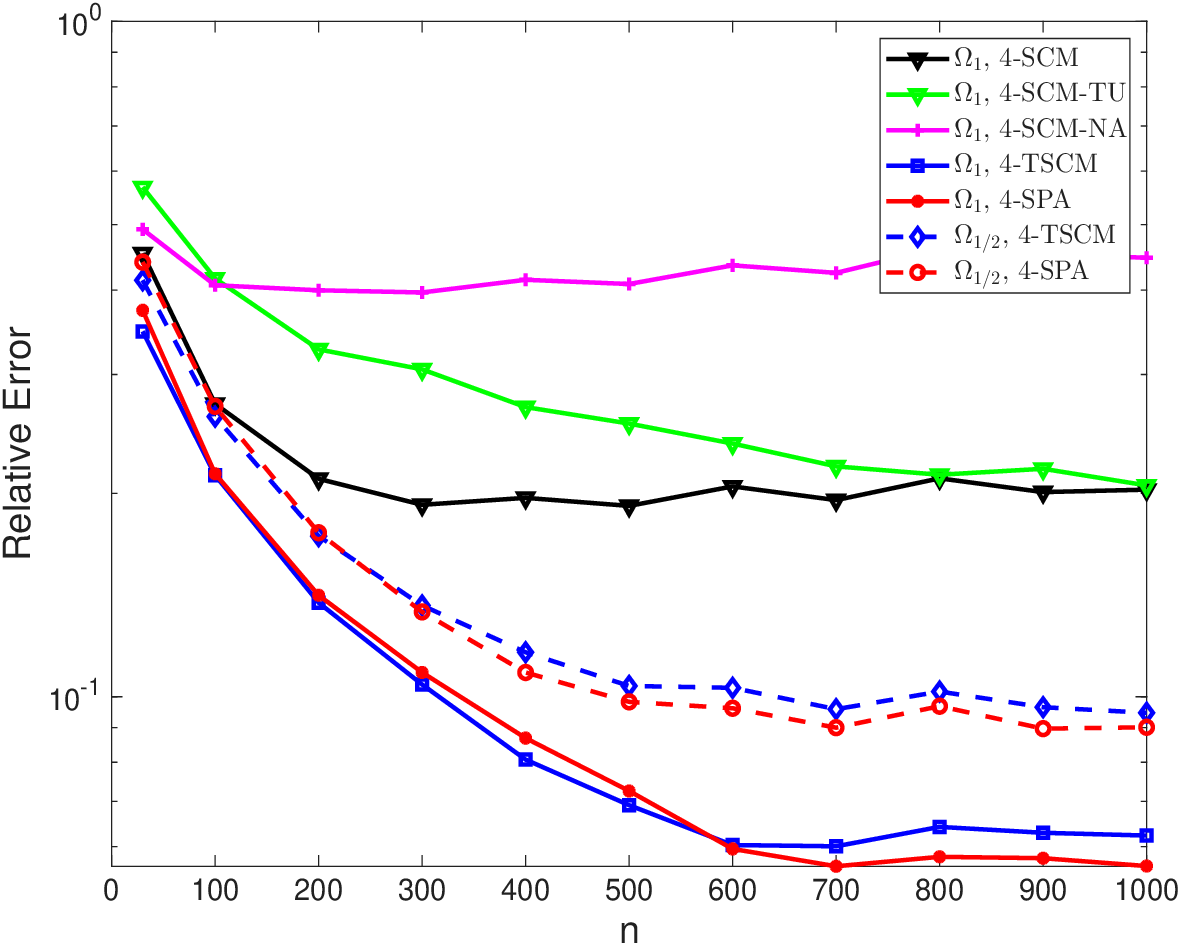}%
    \label{Fig_compare_4bit}}
  \caption{Impact of the number of bits $k$ on Toeplitz covariance estimation and comparison of 4-bit estimators.}
  \label{Fig_finite_bit}
\end{figure}

In \textit{Experiment 4}, we evaluate the finite-bit estimator $2k$-TSCM.
We set $d=16$, $n=500$, and plot the relative error of $2k$-TSCM and $2k$-SPA as a function of $k$ in Fig.~\ref{Fig_impact_k}.
As $k$ increases, the relative error first decreases and then levels off, forming a plateau.
Nevertheless, even in the challenging 4-bit case (i.e., $k=2$), the proposed estimators still achieve competitive accuracy.
In Fig.~\ref{Fig_compare_4bit}, we further compare our approach with existing methods in the 4-bit setting.
Here, 4-SCM denotes the sample covariance matrix under the triangular quantization framework proposed in \cite{chen2025parameter}, without projection onto the set of Toeplitz matrices; 4-SCM-TU represents the parameter-free variant introduced in that paper; and 4-SCM-NA, proposed in \cite{dirksen2024tuning}, is based on a uniform quantization framework.
These competing estimators are applicable only to fully observed data.
Although they are originally formulated in the real domain, their lack of reliance on structural constraints allows them to be extended to the complex domain, for example, via standard lifting techniques.
As shown in Fig.~\ref{Fig_compare_4bit}, our Toeplitz-based estimators (4-TSCM and 4-SPA) clearly outperform the existing methods and retain an advantage even under sparse observations $\Omega_{1/2}$.

\subsection{Applications to DOA Estimation}

Consider a narrowband far-field array model with $K$ sources impinging on a $d$-sensor uniform linear array (ULA).
The received snapshot at time $t$ can be written as
\begin{equation}
    \boldsymbol{z}(t) = \boldsymbol{A}(\boldsymbol{f})\, \boldsymbol{s}(t) + \boldsymbol{n}(t),
\end{equation}
where $\boldsymbol{s}(t) \in \mathbb{C}^K$ denotes the source signal vector and $\boldsymbol{n}(t) \in \mathbb{C}^d$ is additive noise.
The steering matrix is $\boldsymbol{A}(\boldsymbol{f}) = \left[\boldsymbol{a}(f_1), \boldsymbol{a}(f_2), \ldots, \boldsymbol{a}(f_K)\right]$, where
\begin{equation}
    \boldsymbol{a}(f_k) = \big[1, e^{i 2\pi f_k}, \ldots, e^{i 2\pi (d-1) f_k}\big]^T .
\end{equation}
Note that the covariance matrix
\begin{equation}
    \boldsymbol{R} = \mathbb{E}\!\left[\boldsymbol{z}(t)\boldsymbol{z}(t)^H\right] = \boldsymbol{T} + \sigma_n^2 \boldsymbol{I}_d
\end{equation}
is Hermitian Toeplitz, where $\boldsymbol{T} = \sum_{k=1}^K \sigma_k^2 \boldsymbol{a}(f_k)\boldsymbol{a}(f_k)^H$ is the source covariance contribution, and $\sigma_k^2$ and $\sigma_n^2$ denote the power of the $k$-th source and the (spatially white) noise variance at each sensor, respectively.
We consider the case where only a sub-sample indexed by $\Omega$ is observed (corresponds to RSA), and we have access only to the triangularly quantized data $\dot{\boldsymbol{z}}_\Omega(t)$.
The goal of DOA estimation is to recover $\{f_k\}_{k=1}^K$, which has a one-to-one mapping to the directions, from the incomplete quantized observations $\dot{\boldsymbol{z}}_\Omega(t)$.
Given an estimate of the Toeplitz covariance matrix $\boldsymbol{T}$ (or $\boldsymbol{R}$), one may then apply subspace-based methods such as MUSIC or ESPRIT to obtain DOA estimates \cite{yang2018sparse}.

\begin{figure}[tbp]
  \centering
  \subfloat[MSE versus sample size $n$]{%
    \includegraphics[scale=0.32]{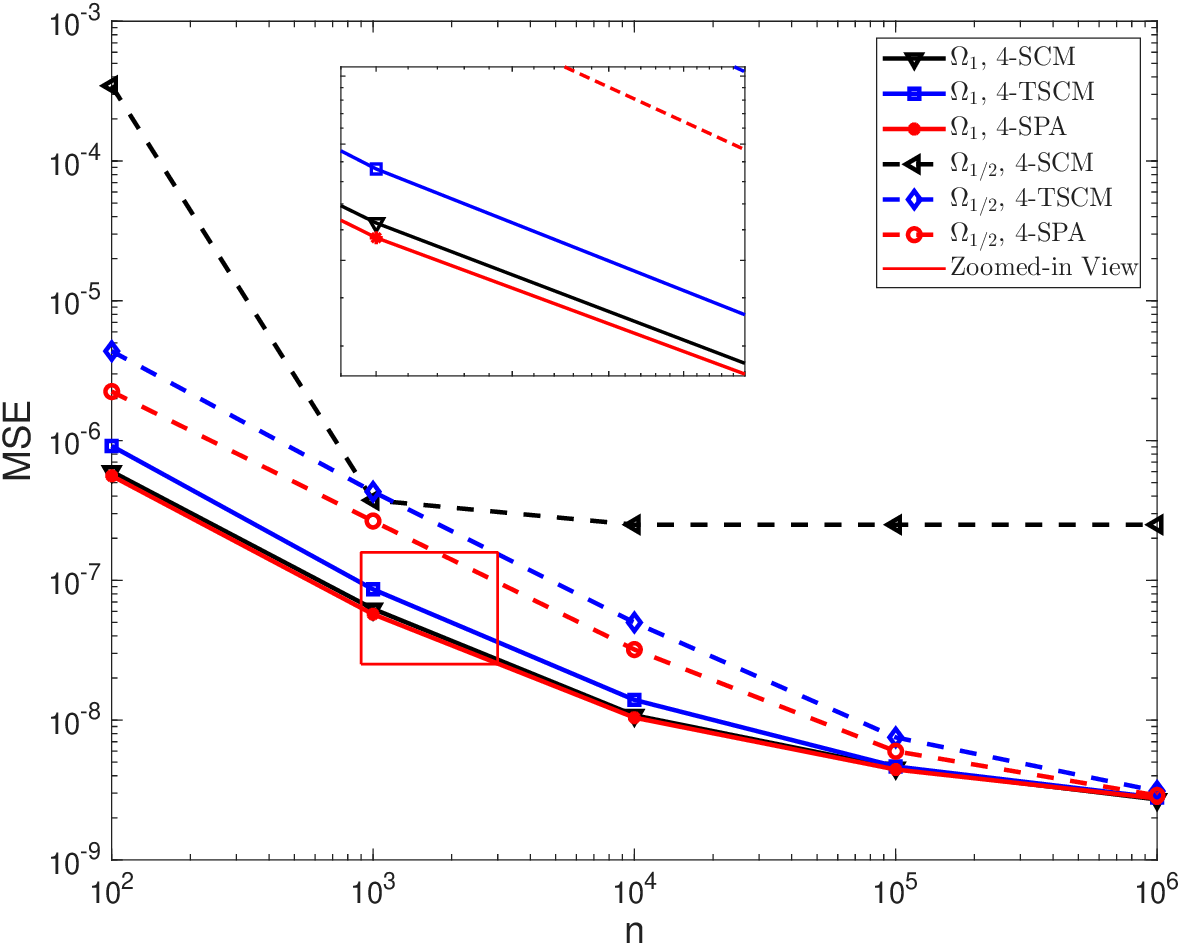}%
    \label{Fig_doa_mse_n}}
  \hfill
  \subfloat[MSE versus SNR]{%
    \includegraphics[scale=0.32]{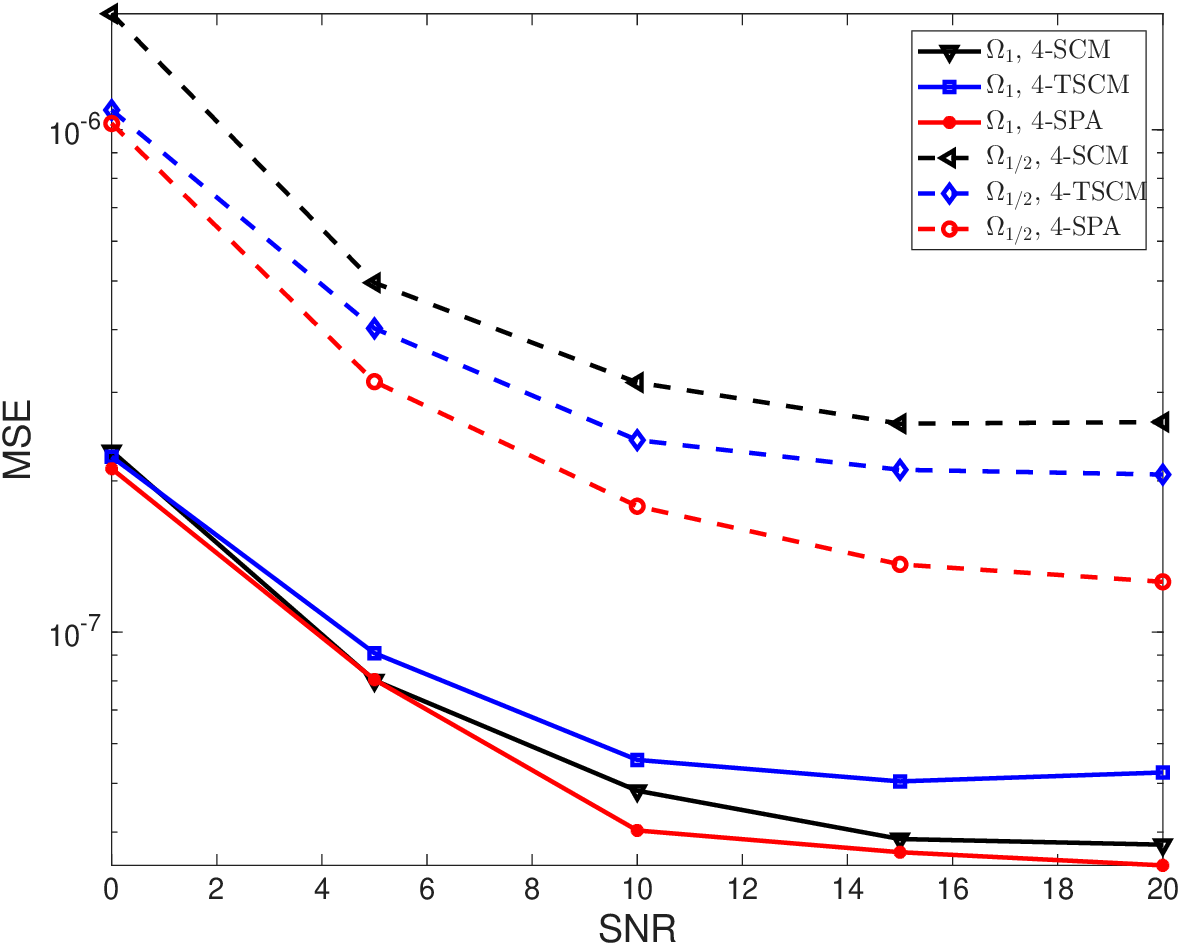}%
    \label{Fig_doa_mse_snr}}
  \caption{Application of 4-bit Toeplitz covariance estimation to DOA estimation.}
  \label{Fig_application_to_DOA}
\end{figure}

Building on the DOA model described above, we apply the proposed 4-bit quantized Toeplitz covariance estimation framework as a preprocessing step for DOA estimation.
Specifically, we first form a (Toeplitz-structured) covariance estimate from the 4-bit quantized observations and then apply MUSIC to estimate $\{f_k\}_{k=1}^K$.
In \textit{Experiment 5}, we set $K=5$, $\boldsymbol{f} = [0.08, 0.21, 0.37, 0.68, 0.81]^T$, and $d=16$.
Fig.~\ref{Fig_doa_mse_n} plots the mean squared error (MSE) of frequency estimation versus the number of snapshots (i.e., the sample size $n$) at a fixed signal-to-noise ratio (SNR).
As $n$ increases, the MSE of each method decreases monotonically.

Under full observations ($\Omega_1$), Q-SPA consistently achieves the lowest MSE, followed by Q-SCM\footnote{This is equivalent to the approach in \cite{dirksen2025subspace}, except that it is instantiated with ESPRIT. We refer readers to \cite{dirksen2025subspace} for further details.}, whereas Q-TSCM yields a comparatively higher MSE.
This suggests that, in the present setting, Toeplitz projection applied directly to the quantized sample covariance matrix does not necessarily improve spatial frequency estimation and may even lead to performance degradation.
For the sparse observation pattern ($\Omega_{1/2}$), Q-SPA still outperforms Q-TSCM, while Q-SCM performs worse than the other two.
As $n$ increases, the MSEs of Q-SPA and Q-TSCM under $\Omega_{1/2}$ approach their counterparts under full observations, indicating that DOA estimation accuracy close to the fully observed case can be achieved even with sparse observations.

As shown in Fig.~\ref{Fig_doa_mse_snr}, for all methods, the MSE decreases as the SNR increases.
Under both full and sparse observations, Q-SPA yields a lower MSE than Q-TSCM and Q-SCM; moreover, in the sparsely observed setting, Q-TSCM outperforms Q-SCM.
Overall, these results indicate that Q-SPA improves DOA estimation performance under low-bit quantization and can be readily applied to sparse-observation scenarios.
One possible explanation is that, in DOA estimation, the underlying Toeplitz covariance matrix is typically low-rank, which makes Q-SPA particularly effective.

\section{Conclusion}
This paper studies multi-bit Toeplitz covariance estimation under the triangular quantization framework.
We propose Q-TSCM and its finite-bit variant $2k$-TSCM to accommodate practical constraints such as missing data and coarse quantization, and analyze their non-asymptotic performance under the complex Gaussian assumption.
By establishing upper bounds on the covariance estimation error, we characterize how the quantization level and the chosen ruler affect the estimation accuracy.
In addition, we propose Q-SPA, which further improves empirical performance while enforcing the Toeplitz structure and positive semidefiniteness.
Our results suggest that triangular quantization provides a natural and effective approach for estimating Toeplitz covariance matrices.

Despite these advances, several challenges remain in quantized (Toeplitz) covariance estimation.
% First, unlike Q-SCM, Q-TSCM is not guaranteed to be positive semidefinite in general.
% How to enforce positive semidefiniteness via suitable post-processing while maintaining estimation accuracy remains an important issue, especially for downstream tasks that rely on the estimated covariance; Q-SPA provides one feasible approach in this direction.
First, we only derive upper bounds in Theorem~\ref{Th_bound_T}; establishing a matching lower bound remains an open problem, even in the unquantized setting.
Second, the impact of coarse quantization on the performance of general covariance estimation algorithms remains insufficiently understood.
Third, it would be of interest to exploit the typical decay of Toeplitz covariance entries along the diagonals to design more effective quantized Toeplitz covariance estimators tailored to high-dimensional, low-sample regimes (or even single-snapshot observations).

\appendix

\subsection{Proof of Theorem~\ref{Th_bound_gamma}}\label{Pr_bound_gamma}
For any $s = 0, 1, \ldots, d-1$ and $(j, k) \in \Omega_s$, we have
\begin{equation}
    \begin{aligned}
        &\dot{z}_j^{(l)} \dot{z}_k^{(l)^*} - \dfrac{\|\boldsymbol{\Delta}\|_2^2}{4} \delta_s - \gamma_s\\
        =& \dot{x}_j^{(l)} \dot{x}_k^{(l)} - \mathbb{E}\left[\dot{x}_j^{(l)} \dot{x}_k^{(l)}\right] + \dot{y}_j^{(l)} \dot{y}_k^{(l)} - \mathbb{E}\left[\dot{y}_j^{(l)} \dot{y}_k^{(l)}\right]\\
        &\quad + i\left(\dot{y}_j^{(l)} \dot{x}_k^{(l)} - \mathbb{E}\left[\dot{y}_j^{(l)} \dot{x}_k^{(l)}\right] - \dot{x}_j^{(l)} \dot{y}_k^{(l)} + \mathbb{E}\left[\dot{x}_j^{(l)} \dot{y}_k^{(l)}\right]\right) \\
        \triangleq & E_{x_j, x_k}^{(l)} + E_{y_j, y_k}^{(l)} + i \left(E_{y_j, x_k}^{(l)} - E_{x_j, y_k}^{(l)}\right),
    \end{aligned}
\end{equation}
and hence
\begin{equation}
    \begin{aligned}
        &\widehat{\gamma}_s - \gamma_s\\
        =& \dfrac{1}{n |\Omega_s|} \sum_{l=1}^{n} \sum_{(j, k) \in \Omega_s} \left( \dot{z}_j^{(l)} \dot{z}_k^{(l)^*} - \dfrac{\|\boldsymbol{\Delta}\|_2^2}{4} \delta_s - \gamma_s\right) \\
        \triangleq & E_{x, x} + E_{y, y} + i \left(E_{y, x} - E_{x, y}\right),
    \end{aligned}
\end{equation}
where, for any $\{\star, \cdot\} \in \{x,y\}\times\{x,y\}$, we define
\begin{equation*}
    \left\{
    \begin{aligned}
        E_{\star, \cdot}^{(l)} &= \dfrac{1}{|\Omega_s|}\sum_{(j, k) \in \Omega_s} E_{\star_j, \cdot_k}^{(l)},\\
        E_{\star, \cdot} &= \dfrac{1}{n} \sum_{l = 1}^n E_{\star, \cdot}^{(l)}.
    \end{aligned}
    \right.
\end{equation*}
Then, for any $t > 0$, we have
\begin{equation}
    \label{Eq_Bound_For_Gamma_s_Parts}
    \begin{aligned}
        &\mathbb{P}\left( |\widehat{\gamma}_s - \gamma_s| > t \right)\\
        =& \mathbb{P}\left( \left(E_{x, x} + E_{y, y}\right)^2 + \left(E_{y, x} - E_{x, y}\right)^2 > t^2 \right)\\
        \leq & \mathbb{P}\left( \left|E_{x, x} + E_{y, y}\right| > \dfrac{t}{\sqrt{2}} \right) + \mathbb{P}\left( \left|E_{y, x} - E_{x, y}\right| > \dfrac{t}{\sqrt{2}} \right)\\
        \leq & \sum_{\{\star, \cdot\} = \{x, y\}}\mathbb{P}\left( \left|E_{\star, \cdot} \right| > \dfrac{t}{2\sqrt{2}} \right)
    \end{aligned}
\end{equation}
Moreover, we note that
\begin{equation}
    \begin{aligned}
        \| \dot{x}_j^{(l)} \|_{\psi_2} =& \| x_j^{(l)} + \omega_{r, j}^{(l)} + \tau_{r, j}^{(l)} \|_{\psi_2} \\
        \leq& \| x_j^{(l)} \|_{\psi_2} + \| \omega_{r, j}^{(l)} \|_{\psi_2} + \| \tau_{r, j}^{(l)} \|_{\psi_2}\\
        \leq& \sqrt{|\gamma_0|} + 2\Delta_r\\
        \leq& \|\boldsymbol{T}\|_2^{1/2} + 2 \| \boldsymbol{\Delta}\|_2\\
        \triangleq& K,
    \end{aligned}
\end{equation}
and similarly, $\| \dot{y}_j^{(l)} \|_{\psi_2} \leq K$, implying that both $\dot{x}_j^{(l)}$ and $\dot{y}_j^{(l)}$ are sub-Gaussian random variables.
Consequently, the product $\dot{x}_j^{(l)} \dot{x}_k^{(l)}$ is sub-exponential and satisfies
\begin{equation}
    \begin{aligned}
        \| \dot{x}_j^{(l)} \dot{x}_k^{(l)} \|_{\psi_1} &\leq \| \dot{x}_j^{(l)} \|_{\psi_2} \| \dot{x}_k^{(l)} \|_{\psi_2} \\
        &\leq \left(\|\boldsymbol{T}\|_2^{1/2} + 2 \| \boldsymbol{\Delta}\|_2\right)^2 = K^2.
    \end{aligned}
\end{equation}
Therefore, there exists a universal constant $C_1 \geq 1$ such that \cite{vershynin2018high}
\begin{equation}
    \begin{aligned}
        \left\| E_{x, x}^{(l)} \right\|_{\psi_1} &= \left\| \dfrac{1}{|\Omega_s|} \sum_{(j, k) \in \Omega_s} E_{x_j, x_k}^{(l)}\right\|_{\psi_1}\\
        &\leq \dfrac{1}{|\Omega_s|} \sum_{(j, k) \in \Omega_s} \left\| \dot{x}_j^{(l)} \dot{x}_k^{(l)} - \mathbb{E}\left[\dot{x}_j^{(l)} \dot{x}_k^{(l)}\right]\right\|_{\psi_1}\\
        &\leq C_1 K^2.
    \end{aligned}
\end{equation}
By Bernstein's inequality, there exists a universal constant $C_2 > 0$ such that, for any $t > 0$,
\begin{equation}
    \begin{aligned}
        &\mathbb{P}\left( \left|E_{x, x}\right| > \dfrac{t}{2\sqrt{2}} \right) = \mathbb{P}\left( \left|\sum_{l=1}^n E_{x, x}^{(l)}\right| > \dfrac{nt}{2\sqrt{2}} \right)\\
        \leq& 2 \exp{\left( -C_2 \min \left\{ \dfrac{n^2 t^2}{8 \sum_{l=1}^n \| E_{x, x}^{(l)} \|_{\psi_1}^2}, \dfrac{nt}{2\sqrt{2} \max_l \| E_{x, x}^{(l)} \|_{\psi_1}} \right\} \right)}\\
        \leq& 2 \exp{\left( -C_2 n \min \left\{ \dfrac{t^2}{8C_1^2K^4}, \dfrac{t}{2\sqrt{2}C_1K^2} \right\} \right)}.
    \end{aligned}
\end{equation}
Similarly, the same bound applies to $\mathbb{P}\bigl( \lvert E_{\star,\cdot} \rvert > \tfrac{t}{2\sqrt{2}} \bigr)$ for all $(\star,\cdot) \in \{x,y\}\times\{x,y\}$.
Combining these bounds with \eqref{Eq_Bound_For_Gamma_s_Parts}, we obtain \eqref{Eq_Bound_For_Gamma} only by noting that
\begin{equation}
    \min \left\{ \dfrac{t^2}{8C_1^2K^4}, \dfrac{t}{2\sqrt{2}C_1K^2} \right\} = \min \left\{\dfrac{t^2}{8C_1^2K^4}, 1\right\}.
\end{equation}
Substituting $t \asymp K^2\sqrt{\delta / n}$ into \eqref{Eq_Bound_For_Gamma} yields \eqref{Eq_error_gamma}, which completes the proof.

\subsection{Proof of Theorem~\ref{Th_bound_T}}\label{Pr_bound_T}
To analyze the non-asymptotic performance of $\widehat{\boldsymbol{T}}$, we begin with the following lemma\cite{grenander1958toeplitz}.

\begin{lemma}
    \label{La_spectral_density}
    Let $\boldsymbol{e} = (e_{-d+1}, \ldots, e_0, \ldots, e_{d-1})^T \in \mathbb{C}^{2d-1}$ and $e_{-s} = e_s^*$, then the spectral norm of the Toeplitz matrix $\mathcal{T}(\boldsymbol{e})$ is bounded by the supremum of its associated spectral density $L_{\boldsymbol{e}} (\theta)$, i.e.,
    \begin{equation}
        \| \mathcal{T}(\boldsymbol{e}) \|_2 \leq \sup_{\theta \in [0,1]} L_{\boldsymbol{e}} (\theta),
    \end{equation}
    where the spectral density is defined by
    \begin{equation}
        L_{\boldsymbol{e}} (\theta) = \sum_{s = 1 - d}^{d - 1} e_s e^{i 2 \pi s \theta} = e_0 + \sum_{s = 1}^{d - 1} \left(e_s e^{i 2 \pi s \theta} + e_s^* e^{- i 2 \pi s \theta}\right).
    \end{equation}
\end{lemma}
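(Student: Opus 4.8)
The plan is to reduce the spectral-norm bound to a one-dimensional estimate on the symbol $L_{\boldsymbol{e}}$ via the Rayleigh quotient. Since $e_{-s}=e_s^*$, the matrix $\mathcal{T}(\boldsymbol{e})$ is Hermitian, so its eigenvalues are real and $\|\mathcal{T}(\boldsymbol{e})\|_2 = \max_{\|\boldsymbol{v}\|_2=1}|\boldsymbol{v}^H\mathcal{T}(\boldsymbol{e})\boldsymbol{v}|$. Moreover $L_{\boldsymbol{e}}(\theta)$ is real-valued, because $e_{-s}e^{-i2\pi s\theta}=e_s^*e^{-i2\pi s\theta}$ is the complex conjugate of $e_s e^{i2\pi s\theta}$ and $e_0=e_0^*$ is real. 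It therefore suffices to show that every Rayleigh quotient lies between $\inf_\theta L_{\boldsymbol{e}}(\theta)$ and $\sup_\theta L_{\boldsymbol{e}}(\theta)$.

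The key step is an integral representation of the quadratic form. To each $\boldsymbol{v}=(v_0,\dots,v_{d-1})^T$ I would associate the trigonometric polynomial $p_{\boldsymbol{v}}(\theta)=\sum_{k=0}^{d-1}v_k e^{-i2\pi k\theta}$, and claim that
\begin{equation}
\boldsymbol{v}^H\mathcal{T}(\boldsymbol{e})\boldsymbol{v} = \int_0^1 |p_{\boldsymbol{v}}(\theta)|^2\, L_{\boldsymbol{e}}(\theta)\,d\theta .
\end{equation}
This is verified by expanding $|p_{\boldsymbol{v}}(\theta)|^2=\sum_{j,k}\bar v_j v_k e^{i2\pi(j-k)\theta}$, multiplying by $L_{\boldsymbol{e}}(\theta)=\sum_s e_s e^{i2\pi s\theta}$, and integrating term by term using the orthonormality relation $\int_0^1 e^{i2\pi m\theta}\,d\theta=\delta_{m,0}$; only the terms with $s=k-j$ survive, yielding $\sum_{j,k}\bar v_j v_k e_{k-j}=\boldsymbol{v}^H\mathcal{T}(\boldsymbol{e})\boldsymbol{v}$. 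Parseval's identity gives $\int_0^1|p_{\boldsymbol{v}}(\theta)|^2\,d\theta=\|\boldsymbol{v}\|_2^2$.

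With this identity in hand the conclusion is immediate. Since $|p_{\boldsymbol{v}}(\theta)|^2\ge 0$ and $L_{\boldsymbol{e}}$ is real, for any unit vector $\boldsymbol{v}$ one has
\begin{equation}
\boldsymbol{v}^H\mathcal{T}(\boldsymbol{e})\boldsymbol{v} \le \sup_{\theta\in[0,1]}L_{\boldsymbol{e}}(\theta)\int_0^1|p_{\boldsymbol{v}}(\theta)|^2\,d\theta = \sup_{\theta\in[0,1]}L_{\boldsymbol{e}}(\theta),
\end{equation}
and symmetrically $\boldsymbol{v}^H\mathcal{T}(\boldsymbol{e})\boldsymbol{v}\ge\inf_\theta L_{\boldsymbol{e}}(\theta)$. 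Taking the supremum and infimum over unit vectors bounds $\lambda_{\max}(\mathcal{T}(\boldsymbol{e}))\le\sup_\theta L_{\boldsymbol{e}}(\theta)$ and $\lambda_{\min}(\mathcal{T}(\boldsymbol{e}))\ge\inf_\theta L_{\boldsymbol{e}}(\theta)$, which together control $\|\mathcal{T}(\boldsymbol{e})\|_2=\max\{\lambda_{\max},-\lambda_{\min}\}$.

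The main point requiring care is the sign in the final bound. The two-sided eigenvalue enclosure actually gives $\|\mathcal{T}(\boldsymbol{e})\|_2\le\max\{\sup_\theta L_{\boldsymbol{e}}(\theta),\,-\inf_\theta L_{\boldsymbol{e}}(\theta)\}=\sup_{\theta}|L_{\boldsymbol{e}}(\theta)|$, which is the form genuinely needed when the lemma is applied to the indefinite error matrix $\mathcal{T}(\boldsymbol{e})=\widehat{\boldsymbol{T}}-\boldsymbol{T}$; the displayed bound should be read with this absolute value (equivalently, as the bound on $\lambda_{\max}$). The symbol $|L_{\boldsymbol{e}}|$ is then estimated in the proof of Theorem~\ref{Th_bound_T} via the triangle inequality $|L_{\boldsymbol{e}}(\theta)|\le\sum_{s=1-d}^{d-1}|e_s|$ combined with the per-generator bounds of Theorem~\ref{Th_bound_gamma} and a union bound over a sufficiently fine grid of frequencies $\theta$. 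A secondary, purely bookkeeping issue is to fix the index convention ($\mathcal{T}(\boldsymbol{e})_{jk}=e_{k-j}$ versus $e_{j-k}$) consistently so that the surviving term in the integral identity is exactly $e_{k-j}$; since $\mathcal{T}(\boldsymbol{e})$ is Hermitian, either choice yields the same eigenvalues and hence the same spectral norm.
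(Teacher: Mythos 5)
Your proof is correct, but there is no paper proof to compare it against: the paper does not prove Lemma~\ref{La_spectral_density}, it cites it as a classical fact from \cite{grenander1958toeplitz}. What you have written is essentially the standard argument behind that citation: associate to each vector the polynomial $p_{\boldsymbol{v}}(\theta)=\sum_k v_k e^{-i2\pi k\theta}$, establish the integral representation $\boldsymbol{v}^H\mathcal{T}(\boldsymbol{e})\boldsymbol{v}=\int_0^1 |p_{\boldsymbol{v}}(\theta)|^2 L_{\boldsymbol{e}}(\theta)\,d\theta$ by orthogonality, use Parseval, and deduce the eigenvalue enclosure $\inf_\theta L_{\boldsymbol{e}}\le \lambda_{\min}\le\lambda_{\max}\le\sup_\theta L_{\boldsymbol{e}}$. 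Your index bookkeeping matches the paper's convention $\mathcal{T}(\boldsymbol{e})_{jk}=e_{k-j}$, and every step is sound. Your closing caveat is moreover a genuine correction rather than a quibble: as literally stated the lemma fails for indefinite matrices (take $e_0=-1$ and all other $e_s=0$, so $\mathcal{T}(\boldsymbol{e})=-\boldsymbol{I}$ has spectral norm $1$ while $\sup_\theta L_{\boldsymbol{e}}=-1$). The enclosure only gives $\|\mathcal{T}(\boldsymbol{e})\|_2\le\max\left\{\sup_\theta L_{\boldsymbol{e}},\,-\inf_\theta L_{\boldsymbol{e}}\right\}=\sup_\theta|L_{\boldsymbol{e}}(\theta)|$, and this absolute-value form is exactly what Appendix~\ref{Pr_bound_T} uses when applying the lemma to the indefinite error matrix $\widehat{\boldsymbol{T}}-\boldsymbol{T}$ after bounding $\sup_\theta|L_{\widehat{\boldsymbol{\gamma}}-\boldsymbol{\gamma}}(\theta)|$. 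So your proof supplies the missing argument and repairs the statement at the same time.

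One inaccuracy in your final remarks, tangential to the lemma itself: the paper does not estimate $|L_{\widehat{\boldsymbol{\gamma}}-\boldsymbol{\gamma}}|$ via the triangle inequality $|L_{\boldsymbol{e}}(\theta)|\le\sum_s|e_s|$ combined with the per-generator bounds. That route would be uniform in $\theta$ (no grid needed) but lossier, costing a factor of order $d$ instead of $\sqrt{\phi(\Omega)\log d}$. Instead, at each fixed grid point the paper applies the Hanson--Wright inequality to the quadratic form $\dot{\boldsymbol{z}}_\Omega^{(l)^*}\boldsymbol{M}_\Omega\dot{\boldsymbol{z}}_\Omega^{(l)}$, which is where the coverage coefficient $\phi(\Omega)$ enters through $\|\boldsymbol{\Lambda}_m\|_F^2\le 8\phi(\Omega)$; the per-generator bounds of Theorem~\ref{Th_bound_gamma} are used only for the derivative estimate $|L'_{\widehat{\boldsymbol{\gamma}}-\boldsymbol{\gamma}}(\alpha)|\le 2\pi d^2\max_s|\widehat{\gamma}_s-\gamma_s|$ that transfers the grid bound to all of $[0,1]$. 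This does not affect the validity of your proof of the lemma, which stands.
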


Based on this lemma, we decompose the analysis into the following three steps.

\textbf{Firstly}, we investigate the non-asymptotic properties of the coefficient estimators $\widehat{\gamma}_s$.
This has already been established in Theorem~\ref{Th_bound_gamma}.
In particular, by applying \eqref{Eq_Bound_For_Gamma} and taking a union bound over $s \in [d]$, we obtain
\begin{equation}
    \mathbb{P}\bigl(\exists\, s \in [d] : \lvert \widehat{\gamma}_s - \gamma_s \rvert > t \bigr)
    \leq 8 d \exp\!\left(
        - \frac{C_2 n t^2}{8 C_1^2 K^4}
    \right).
\end{equation}

\textbf{Secondly}, we analyze the non-asymptotic behavior of $L_{\widehat{\boldsymbol{\gamma}} - \boldsymbol{\gamma}} (\theta)$.
For a fixed $\theta \in [0, 1]$, we associate $L_{\widehat{\boldsymbol{\gamma}} - \boldsymbol{\gamma}}(\theta)$ with a Hermitian matrix $\boldsymbol{M} \in \mathbb{C}^{d \times d}$ defined by
\begin{equation}
    M_{j, k} = \dfrac{e^{i 2 \pi s \theta}}{|\Omega_s|}, \quad 1 \leq j \leq k \leq d,
\end{equation}
where $s = k - j$ and $M_{k, j} = M_{j, k}^*$. It then follows that
\begin{equation}
    \begin{aligned}
        L_{\widehat{\boldsymbol{\gamma}} - \boldsymbol{\gamma}}(\theta) &= e_0 + \sum_{s = 1}^{d - 1} (\widehat{\gamma}_s - \gamma_s) e^{i 2 \pi s \theta} + \sum_{s = 1}^{d - 1} (\widehat{\gamma}_s - \gamma_s)^* e^{- i 2 \pi s \theta}\\
        &= \text{tr}(\widehat{\boldsymbol{T}}_\Omega - \boldsymbol{T}_\Omega, \boldsymbol{M}_\Omega).
    \end{aligned}
\end{equation}
Furthermore, let
\begin{equation}
    \overline{\boldsymbol{T}} = \dfrac{1}{n} \sum_{l = 1}^n \dot{\boldsymbol{z}}^{(l)} \dot{\boldsymbol{z}}^{(l)^*} - \dfrac{\| \boldsymbol{\Delta}\|_2^2}{4} \boldsymbol{I}_d.
\end{equation}
We then observe that
\begin{equation}
    \sum_{(j, k) \in \Omega_s} \widehat{T}_{j, k} = \sum_{(j, k) \in \Omega_s} \overline{T}_{j, k},
\end{equation}
and hence
\begin{equation}
    \begin{aligned}
        \text{tr}(\widehat{\boldsymbol{T}}_\Omega, \boldsymbol{M}_\Omega) &= \sum_{(j, k) \in \Omega^2} \widehat{T}_{j, k} M_{j, k} = \sum_{s = 1-d}^{d-1} \sum_{(j, k) \in \Omega_s} \widehat{T}_{j, k} M_{j, k}\\
        &= \sum_{s = 1-d}^{d-1} \sum_{(j, k) \in \Omega_s} \overline{T}_{j, k} M_{j, k} = \text{tr}(\overline{\boldsymbol{T}}_\Omega, \boldsymbol{M}_\Omega).
    \end{aligned}
\end{equation}
The third equality follows from the fact that $M_{j, k}$ depends only on the index difference $k-j$.
Let
\begin{equation}
    \dot{\boldsymbol{\Sigma}} = \mathbb{E}\left[\dot{\boldsymbol{z}}^{(l)} \dot{\boldsymbol{z}}^{(l)^*} \right] = \boldsymbol{T} + \dfrac{\|\boldsymbol{\Delta\|_2^2}}{4} \boldsymbol{I}_d,
\end{equation}
then we have
\begin{equation}
    \mathbb{E} \left[ \dot{\boldsymbol{z}}_\Omega^{(l)^*} \boldsymbol{M}_\Omega \dot{\boldsymbol{z}}_\Omega^{(l)} \right] = \text{tr}(\dot{\boldsymbol{\Sigma}}_\Omega, \boldsymbol{M}_\Omega),
\end{equation}
which leads to
\begin{equation}
    \begin{aligned}
        L_{\widehat{\boldsymbol{\gamma}} - \boldsymbol{\gamma}}(\theta) &= \text{tr}(\widehat{\boldsymbol{T}}_\Omega - \boldsymbol{T}_\Omega, \boldsymbol{M}_\Omega)\\
        &= \text{tr}(\dot{\boldsymbol{\Sigma}}_\Omega, \boldsymbol{M}_\Omega) - \dfrac{1}{n} \sum_{l = 1}^n \dot{\boldsymbol{z}}_\Omega^{(l)^*} \boldsymbol{M}_\Omega \dot{\boldsymbol{z}}_\Omega^{(l)}\\
        &= \dfrac{1}{n} \sum_{l=1}^n \left(\mathbb{E} \left[ \dot{\boldsymbol{z}}_\Omega^{(l)^*} \boldsymbol{M}_\Omega \dot{\boldsymbol{z}}_\Omega^{(l)} \right] - \dot{\boldsymbol{z}}_\Omega^{(l)^*} \boldsymbol{M}_\Omega \dot{\boldsymbol{z}}_\Omega^{(l)}\right).
    \end{aligned}
\end{equation}
In addition, if we write $\boldsymbol{M} = \boldsymbol{M}_{r, \Omega} + i \boldsymbol{M}_{i, \Omega}$, then
\begin{equation}
    \begin{aligned}
        &\dot{\boldsymbol{z}}_\Omega^{(l)^*} \boldsymbol{M}_\Omega \dot{\boldsymbol{z}}_\Omega^{(l)}\\
        =& \left(\dot{\boldsymbol{x}}_\Omega^{(l)^T} - i \dot{\boldsymbol{y}}_\Omega^{(l)^T} \right)  \left(\boldsymbol{M}_{r, R} + i \boldsymbol{M}_{i, R}\right) \left(\dot{\boldsymbol{x}}_\Omega^{(l)} + i \dot{\boldsymbol{y}}_\Omega^{(l)}\right)\\
        =& \boldsymbol{u}_\Omega^{(l)^T} \boldsymbol{\Lambda}_{1} \boldsymbol{u}_\Omega^{(l)} + i \boldsymbol{u}_\Omega^{(l)^T} \boldsymbol{\Lambda}_{2} \boldsymbol{u}_\Omega^{(l)}
    \end{aligned}
\end{equation}
where $\boldsymbol{u}_\Omega^{(l)} = \left(\boldsymbol{x}_\Omega^{(l)^T}, \boldsymbol{y}_\Omega^{(l)^T}\right)^T$, and
\begin{equation}
    \boldsymbol{\Lambda}_1 = \begin{pmatrix}
        \boldsymbol{M}_{r, \Omega} & -\boldsymbol{M}_{i, \Omega}\\
        \boldsymbol{M}_{i, \Omega} & \boldsymbol{M}_{r, \Omega}
    \end{pmatrix},
    \boldsymbol{\Lambda}_2 = \begin{pmatrix}
        \boldsymbol{M}_{i, \Omega} & \boldsymbol{M}_{i, \Omega}\\
        -\boldsymbol{M}_{i, \Omega} & \boldsymbol{M}_{i, \Omega}
    \end{pmatrix}.
\end{equation}

For any $t > 0$, we have
\begin{equation}
    \begin{aligned}
        &\mathbb{P}\left( |L_{\widehat{\boldsymbol{\gamma}} - \boldsymbol{\gamma}}(\theta)| > t\right) \\
        \leq& \mathbb{P} \left( \left| \dfrac{1}{n} \sum_{l=1}^n\left(\mathbb{E} \left[\boldsymbol{u}_\Omega^{(l)^T} \boldsymbol{\Lambda}_{1} \boldsymbol{u}_\Omega^{(l)}\right] - \boldsymbol{u}_\Omega^{(l)^T} \boldsymbol{\Lambda}_{1} \boldsymbol{u}_\Omega^{(l)}\right)\right| > \dfrac{t}{\sqrt{2}}\right)\\
        &+ \mathbb{P} \left( \left| \dfrac{1}{n} \sum_{l=1}^n\left(\mathbb{E} \left[\boldsymbol{u}_\Omega^{(l)^T} \boldsymbol{\Lambda}_{2} \boldsymbol{u}_\Omega^{(l)}\right] - \boldsymbol{u}_\Omega^{(l)^T} \boldsymbol{\Lambda}_{2} \boldsymbol{u}_\Omega^{(l)}\right)\right| > \dfrac{t}{\sqrt{2}}\right)
    \end{aligned}
\end{equation}
Applying the Hanson-Wright inequality \cite{rudelson2013hanson}, we conclude that there exists a constant $C_3$ such that, for $m = 1,2$,
\begin{equation}
    \begin{aligned}
        &\mathbb{P} \left( \left| \dfrac{1}{n} \sum_{l=1}^n\left(\mathbb{E} \left[\boldsymbol{u}_\Omega^{(l)^T} \boldsymbol{\Lambda}_{m} \boldsymbol{u}_\Omega^{(l)}\right] - \boldsymbol{u}_\Omega^{(l)^T} \boldsymbol{\Lambda}_{m} \boldsymbol{u}_\Omega^{(l)}\right)\right| > \dfrac{t}{\sqrt{2}}\right)\\
        \leq& 2 \exp{\left( -C_3 n \min \left\{ \dfrac{t^2}{2K^4 \|\boldsymbol{\Lambda}_{m}\|_F^2}, \dfrac{t}{\sqrt{2}K^2 \|\boldsymbol{\Lambda}_{m}\|_2} \right\} \right)}\\
        \leq& 2 \exp{\left( -C_3 n \min \left\{ \dfrac{t^2}{2K^4 \|\boldsymbol{\Lambda}_{m}\|_F^2}, \dfrac{t}{\sqrt{2}K^2 \|\boldsymbol{\Lambda}_{m}\|_F} \right\} \right)}\\
        \leq&2 \exp{\left( -C_3 n \min \left\{ \dfrac{t^2}{16K^4 \phi(\Omega)}, 1 \right\} \right)},
    \end{aligned}
\end{equation}
where the second equation uses the fact that $\|\boldsymbol{\Lambda}_{m}\|_2 \leq \|\boldsymbol{\Lambda}_{m}\|_F$, and the last inequality follows from
\begin{equation}
    \begin{aligned}
        \|\boldsymbol{\Lambda}_{m}\|_F^2 &= 2\|\boldsymbol{M}_{r}\|_F^2 + 2 \|\boldsymbol{M}_{i}\|_F^2\\
        &\leq 4 \sum_{s=0}^{d-1} |\Omega_s| \cdot \dfrac{\cos^2(2\pi s\theta)}{|\Omega_s|^2} + 4 \sum_{s=0}^{d-1} |\Omega_s| \cdot \dfrac{\sin^2(2\pi s\theta)}{|\Omega_s|^2}\\
        &\leq 8 \sum_{s=0}^{d-1} \dfrac{1}{|\Omega_s|}\\
        &= 8 \phi(\Omega).
    \end{aligned}
\end{equation}
Therefore,
\begin{equation}
    \label{Eq_Bound_For_L}
    \mathbb{P}\left( |L_{\widehat{\boldsymbol{\gamma}} - \boldsymbol{\gamma}}(\theta)| > t\right) \leq 4 \exp{\left( -C_3 n \min \left\{ \dfrac{t^2}{16K^4 \phi(\Omega)}, 1 \right\} \right)}.
\end{equation}

\textbf{Finally}, we analyze the non-asymptotic performance of the Toeplitz covariance estimator $\widehat{\boldsymbol{T}}$.
We use a standard $\epsilon$-net argument to complete the proof.
Discretise the interval $[0,1]$ by the finite grid
\begin{equation}
    N = \left\{0, \dfrac{1}{4 \pi d^2}, \dfrac{2}{4 \pi d^2}, \ldots, 1\right\}.
\end{equation}
then, by the bound established in \eqref{Eq_Bound_For_L}, we obtain
\begin{equation}
    \begin{aligned}
        \mathbb{P}\left(\exists \theta \in N: |L_{\widehat{\boldsymbol{\gamma}} - \boldsymbol{\gamma}}(\theta)| > \dfrac{t}{2}\right) \leq 20 \pi d^2 \exp{\left( -\dfrac{C_3 n t^2}{64K^4 \phi(\Omega)} \right)}.
    \end{aligned}
\end{equation}
Define the event
\begin{equation}
\mathcal{A}:\left\{
\begin{aligned}
&\forall s \in [d]:\; |\widehat{\gamma}_s - \gamma_s| \leq t,\\
&\forall \theta \in N:\; |L_{\widehat{\boldsymbol{\gamma}} - \boldsymbol{\gamma}}(\theta)|
   \leq \frac{t}{2}.
\end{aligned}
\right.
\end{equation}
then it follows that
\begin{equation}
    \begin{aligned}
        \mathbb{P}\left(\mathcal{A}\right)
        \geq& 1 - 8 d \exp{\left(-\dfrac{C_2 n t^2}{8C_1^2 K^4}\right)} \\
        &- 20 \pi d^2 \exp{\left( -\dfrac{C_3 n t^2}{64K^4 \phi(\Omega)} \right)}\\
        \geq& 1 - 23 \pi d^2 \exp{\left( -\dfrac{C n t^2}{K^4 \phi(\Omega)} \right)},
    \end{aligned}
\end{equation}
where $C = \min \left\{\frac{C_2}{8C_1^2}, \frac{C_3}{64}\right\}$.
For any $\theta \in [0, 1]$, there exists a grid point $\theta' \in N$ such that $\theta - \theta' \in \left[0, \frac{\epsilon}{4 \pi d^2}\right)$.
On the event $\mathcal{A}$, we have
\begin{equation}
    \begin{aligned}
        | L_{\widehat{\boldsymbol{\gamma}} - \boldsymbol{\gamma}}(\theta) | &\leq | L_{\widehat{\boldsymbol{\gamma}} - \boldsymbol{\gamma}}(\theta') | + | \theta - \theta'| \sup_{\alpha \in [\theta', \theta]} | L'_{\widehat{\boldsymbol{\gamma}} - \boldsymbol{\gamma}}(\alpha) |\\
        &\leq \dfrac{t}{2} + \dfrac{t}{4 \pi d^2} 2\pi d^2 t = t,
    \end{aligned}
\end{equation}
where the last inequality uses the fact that
\begin{equation}
    \begin{aligned}
        | L'_{\widehat{\boldsymbol{\gamma}} - \boldsymbol{\gamma}}(\alpha) | &= \left| 2 \pi i \sum_{s=1-d}^{d-1} s \left(\widehat{\gamma}_s - \gamma_s\right) e^{i 2\pi s\alpha} \right| \\
        &\leq 2 \pi d^2 \max_{s\in [d]}| \widehat{\boldsymbol{\gamma}} - \boldsymbol{\gamma} | \leq 2 \pi d^2 t.
    \end{aligned}
\end{equation}
In conclusion, we arrive at the bound
\begin{equation}
    \begin{aligned}
        &\mathbb{P}\left(\exists \theta \in [0, 1]: |L_{\widehat{\boldsymbol{\gamma}} - \boldsymbol{\gamma}}(\theta)| > t \right) \\
        &\leq 23 \pi d^2 \exp{\left( -\dfrac{C n t^2}{K^4 \phi(\Omega)} \right)},
    \end{aligned}
\end{equation}
which immediately implies \eqref{Eq_Bound_For_T} by using Lemma~\ref{La_spectral_density}.

\subsection{Proof of Theorem~\ref{Th_bound_2k_bit_T}}\label{Pr_bound_2k_bit_T}
The analysis here is analogous to that in \cite[Appendix~F]{xu2024bit}.
We therefore focus only on the proof of \eqref{Eq_bound_2k_bit_T}; once \eqref{Eq_bound_2k_bit_T} is established, the remaining statements follow directly.

Note that both $\Re(\tau_j^{(l)})$ and $\Im(\tau_j^{(l)})$ are supported on $[-\Delta, \Delta]$, and hence
\begin{equation}
    |\tau_j^{(l)}| = \sqrt{\left(\Re(\tau_j^{(l)})\right)^2 + \left(\Im(\tau_j^{(l)})\right)^2} \leq \sqrt{2}\Delta. 
\end{equation}
Therefore, \eqref{Eq_max_z_plus_tau} is guaranteed if
\begin{equation}
    \max_{j \in \Omega, 1 \leq l \leq n} \left\{ \left| z_i^{(l)} \right| \right\} \leq (2^{k-1} - \sqrt{2}) \Delta.
\end{equation}
Since $z_j^{(l)} \sim \mathcal{CN}(0,\gamma_0)$, the magnitude $\bigl|z_j^{(l)}\bigr|$ follows a Rayleigh distribution.
Thus, for any $t > 0$,
\begin{equation}
    \mathbb{P}\left(\left| z_i^{(l)} \right| > t\right) = \exp\left(-\dfrac{t^2}{\gamma_0}\right).
\end{equation}
Taking a union bound over $j\in \Omega$ and $1 \leq l \leq n$ yields
\begin{equation}
    \label{Eq_T_T_k}
    \mathbb{P}\left(\max_{j \in \Omega, 1 \leq l \leq n} \left\{ \left| z_i^{(l)} \right| \right\} > t\right) \leq n |\Omega| \exp\left(-\dfrac{t^2}{\gamma_0}\right).
\end{equation}
Let
\begin{equation}
    \Delta \geq 2^{2-k} \sqrt{\gamma_0 \left(\log\left(n |\Omega|\right) + \delta'\right)}
\end{equation}
and set $t = (2^{k-1} - \sqrt{2}) \Delta$ in \eqref{Eq_T_T_k}.
We then obtain
\begin{equation}
    \label{Eq_P_T_neq_T_k}
    \begin{aligned}
        \mathbb{P}\left(\widehat{\boldsymbol{T}}_{2k} \neq \widehat{\boldsymbol{T}}\right) &\leq \mathbb{P}\left(\max_{j \in \Omega, 1 \leq l \leq n} \left\{ \left| z_i^{(l)} \right| \right\} > (2^{k-1} - \sqrt{2}) \Delta\right)\\
        &\leq e^{-\delta'}.
    \end{aligned}
\end{equation}
In other words, with probability at least $1 - e^{-\delta'}$, the estimators
$\widehat{T}_{2k}$ and $\widehat{T}$ coincide and hence share the same error bound.
Combining \eqref{Eq_P_T_neq_T_k} with the bound in \eqref{Eq_error_T}, we
immediately obtain that
\begin{equation}
    \|\widehat{\boldsymbol{T}}_{2k} - \boldsymbol{T}\|_2 \leq C' K^2 \sqrt{\dfrac{\phi(\Omega) \cdot \delta \log d}{n}}
\end{equation}
holds with probability at least $1-e^{-2\delta}-e^{-\delta'}$.

%% ********** Refference ***********
\bibliographystyle{IEEEtran}
\bibliography{References}

\end{document}